\journalname{Machine learning}
\begin{document}
\raggedbottom

\title{Boolean matrix logic programming for active learning of gene functions in genome-scale metabolic network models}

\author{Lun Ai \and Stephen H. Muggleton \and Shi-Shun Liang \and Geoff S. Baldwin}

%\authorrunning{Short form of author list} % if too long for running head

\institute{Lun Ai (corresponding author) \at
           Department of Computing, Imperial College London, London, UK\\
           \email{lun.ai15@imperial.ac.uk}
           \and
           Stephen H. Muggleton \at
           Department of Computing, Imperial College London, London, UK\\
           \email{s.muggleton@imperial.ac.uk}
           \and
           Shi-Shun Liang \at
           Department of Life Sciences, Imperial College London, London, UK\\
           \email{shishun.liang20@imperial.ac.uk}
           \and
           Geoff S. Baldwin\at
           Department of Life Sciences, Imperial College London, London, UK\\
           \email{g.baldwin@imperial.ac.uk}
}
\date{}

\maketitle
\begin{abstract}
\vspace{-50pt}
Reasoning about hypotheses and updating knowledge through empirical observations are central to scientific discovery. In this work, we applied logic-based machine learning methods to drive biological discovery by guiding experimentation. Genome-scale metabolic network models (GEMs) - comprehensive representations of metabolic genes and reactions - are widely used to evaluate genetic engineering of biological systems. However, GEMs often fail to accurately predict the behaviour of genetically engineered cells, primarily due to incomplete annotations of gene interactions.  The task of learning the intricate genetic interactions within GEMs presents computational and empirical challenges. To efficiently predict using GEM, we describe a novel approach called Boolean Matrix Logic Programming (BMLP) by leveraging Boolean matrices to evaluate large logic programs. We developed a new system, $BMLP_{active}$, which guides cost-effective experimentation and uses interpretable logic programs to encode a state-of-the-art GEM of a model bacterial organism. Notably, $BMLP_{active}$  successfully learned the interaction between a gene pair with fewer training examples than random experimentation, overcoming the increase in experimental design space. $BMLP_{active}$ enables rapid optimisation of metabolic models to reliably engineer biological systems for producing useful compounds. It offers a realistic approach to creating a self-driving lab for biological discovery, which would then facilitate microbial engineering for practical applications.

\keywords{Computational Scientific Discovery; Synthetic Biology; Active Learning; Inductive Logic Programming; Matrix.}
\end{abstract}

\section{Introduction}
\label{sec:introduction} 

In the pursuit of understanding natural phenomena, scientists often formulate plausible theories and design experiments to test competing hypotheses. This process typically involves careful experimental design and abductive reasoning. Experimentation can be optimised through active learning \cite{mitchell_generalization_1982,cohn_improving_1994}, where informative training examples are strategically selected to combat resource limitations. Inductive Logic Programming (ILP) \cite{ILP1991} offers an automated approach to abduction, representing observations, hypotheses and background knowledge through interpretable logic programs. In ILP, hypotheses are learned to explain observational data with respect to the background knowledge. ``Askable'' hypotheses, known as abducibles, that might explain contradictions between observations and prior background knowledge are proposed \cite{ase_progol,TCIE}. The extended knowledge, enriched with abduced hypotheses, is subsequently verified against observations. 

Our work explored the applicability of abductive reasoning and active learning to identify the functions of metabolic genes in model bacterial organisms. Given that biological relationships are commonly described logically, ILP is particularly adept at operating on biological knowledge bases. The integration of abductive reasoning and active learning via ILP was successfully demonstrated in the context of biological discovery by the Robot Scientist \cite{King04:RobotScientist}. The Robot Scientist performed active learning by strategically selecting key experiments to achieve more data- and cost-effective gene function learning than random experimentation. However, this demonstration was limited to only 17 genes in the aromatic amino acid pathway of yeast. Our work significantly advanced the application of this integrative learning paradigm to genome-scale biological discoveries by examining genome-scale metabolic models (GEMs). We examined the GEM model iML1515 \cite{iML1515}, which encompasses 1515 genes and 2719 metabolic reactions of the \textit{Escherichia coli} (\textit{E. coli}) strain K-12 MG1655, a versatile organism for metabolic engineering to produce specific compounds. 

\begin{figure}[t]
    \centering
        \includegraphics[width=\linewidth]{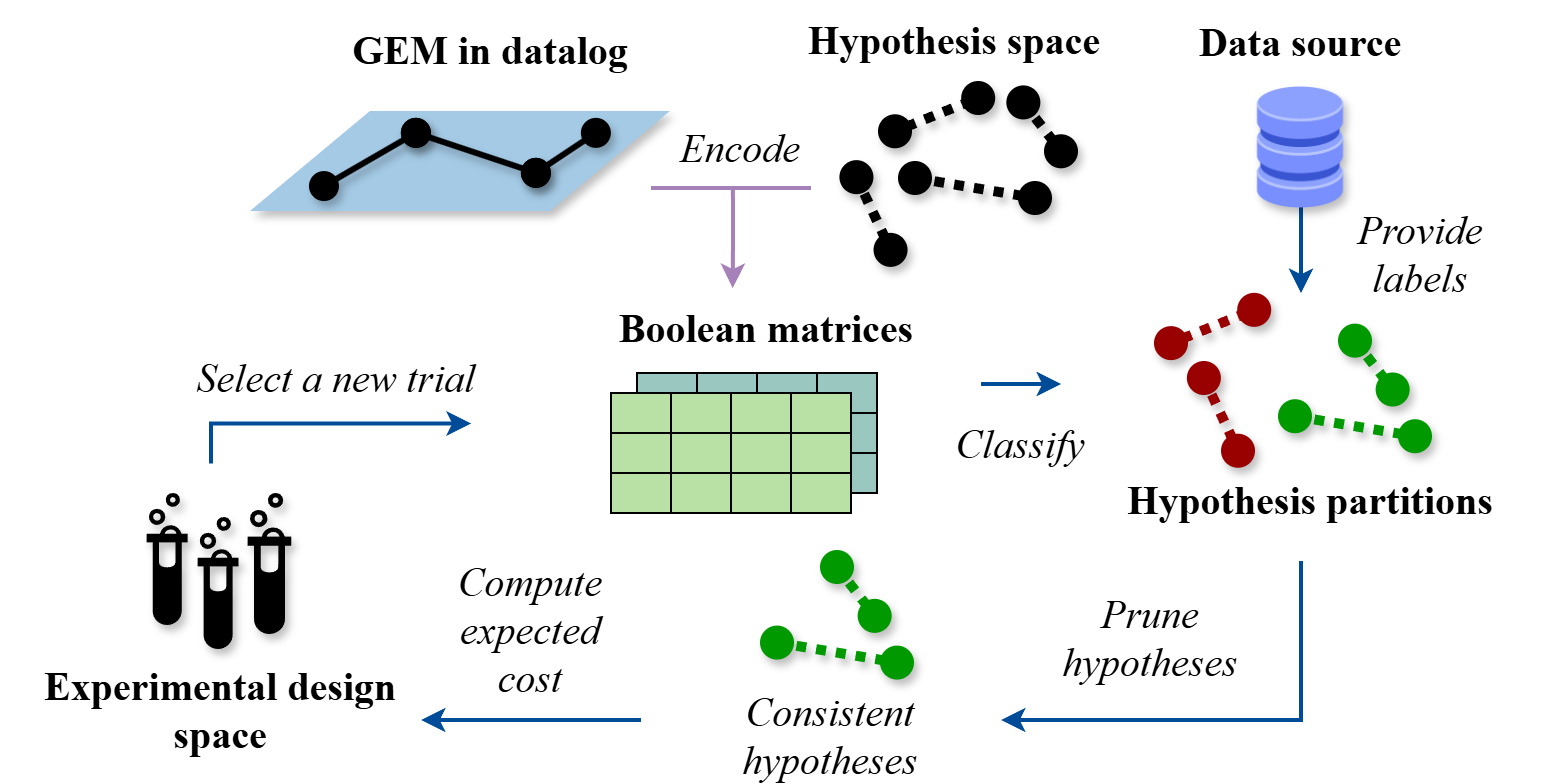} 
    \caption{$BMLP_{active}$ encodes the GEM iML1515 as Boolean matrices and predicts the auxotrophic mutant phenotypes. It actively consults a data source to request ground truth labels, which minimises the expected experimental cost based on a user-provided cost function. The underlying BMLP iteratively refutes gene function annotation hypotheses inconsistent with labelled training examples. While $BMLP_{active}$ to date only learns from synthetic data, a laboratory robot can be integrated to perform selected experiments.}
    % \vspace{-10pt}
    \label{fig:framework_overview}
\end{figure}

Despite the extensive mapping of the \textit{E. coli} genome to metabolic functions in iML1515, inaccurate phenotype predictions have been identified due to erroneous gene function annotations \cite{bernstein_evaluating_2023}. Learning gene function annotations in GEMs is computationally and empirically challenging. We built the first logic programming system on GEMs $BMLP_{active}$ (Fig. \ref{fig:framework_overview}) based on our novel Boolean Matrix Logic Programming (BMLP) approach. $BMLP_{active}$ uses Boolean matrices to encode the biochemical and genetic relationships in iML1515 and enables high-throughput logical inferences to predict auxotrophic mutant phenotypes. $BMLP_{active}$ implements active learning to select auxotrophic mutant experiments that minimise the expected experimental cost via a user-defined cost function. $BMLP_{active}$ successfully learned gene function annotations through active learning while reducing 90\% of the optional nutrient substance cost required by randomly selected experiments. It additionally reduced the number of experimental data needed compared to random experiment selection. When operating within a finite budget, our approach could deliver optimal experimental outcomes, whereas studies employing random experiment selection might not reach completion. 

Furthermore, we applied $BMLP_{active}$ to learning digenic function annotations, which holds significant implications for drug development and therapeutic interventions \cite{costanzo_global_2016}. Digenic functions are related to isoenzymes - two enzymes responsible for the same reaction. Digenic interactions between complex genotypes and phenotypes remains largely unexplored in most organisms  \cite{costanzo_global_2019}, and their dynamics depend on the growth conditions \cite{bernstein_evaluating_2023}. A comprehensive understanding of digenic interactions would expedite strain engineering efforts. We show that $BMLP_{active}$ converged to the correct gene-isoenzyme mapping with as few as 20 training examples. This represents a significant promise of $BMLP_{active}$ to address complex genetic interactions on the whole genome level.

\section{Representing a GEM using datalog}

We refer the readers to \cite{LP_book,ILP_foundation} for terminology on datalog programs. A recursive program has a body literal that appears in the head of a clause. Specifically, we focused on linear and immediately recursive datalog as simple recursive datalog programs. A linear and immediately recursive datalog program has a single recursive clause where the recursive predicate appears in the body only once \cite{ioannidis_computation_1986}. This subset of recursive datalog programs can represent the relational structure of reaction pathways:
\begin{flalign*}
&pathway(X,Y) \gets reaction(X,Y).  \\
&pathway(X,Y) \gets reaction(X,Z), pathway(Z,Y).
\end{flalign*}
Petri nets are a class of directed bipartite graphs \cite{reisig_understanding_2013}, and they are a commonly used graphical tool to model processes in biological systems \cite{sahu_advances_2021}. we refer readers to \cite{reisig_understanding_2013,rozenberg_elementary_1998} for background on Petri net. Petri nets contain nodes marked by tokens (black dots), which indicate the availability of resources such as chemical metabolites. The presence of reactants or products determines which reaction pathways are viable and how the resources are allocated. We examined a class of Petri nets called the one-bounded elementary net (OEN) \cite{reisig_understanding_2013}, which allocates at most one token per node (definitions in Appendix \ref{app:OEN}). We use OEN as a conceptual model of essential metabolites and pathways in a metabolic network. We provided an example of a simple metabolic network encoded as a linear and immediately recursive datalog program $\mathcal{P}_1$. 

\begin{minipage}{0.4\textwidth}
\includegraphics[width=0.9\linewidth]{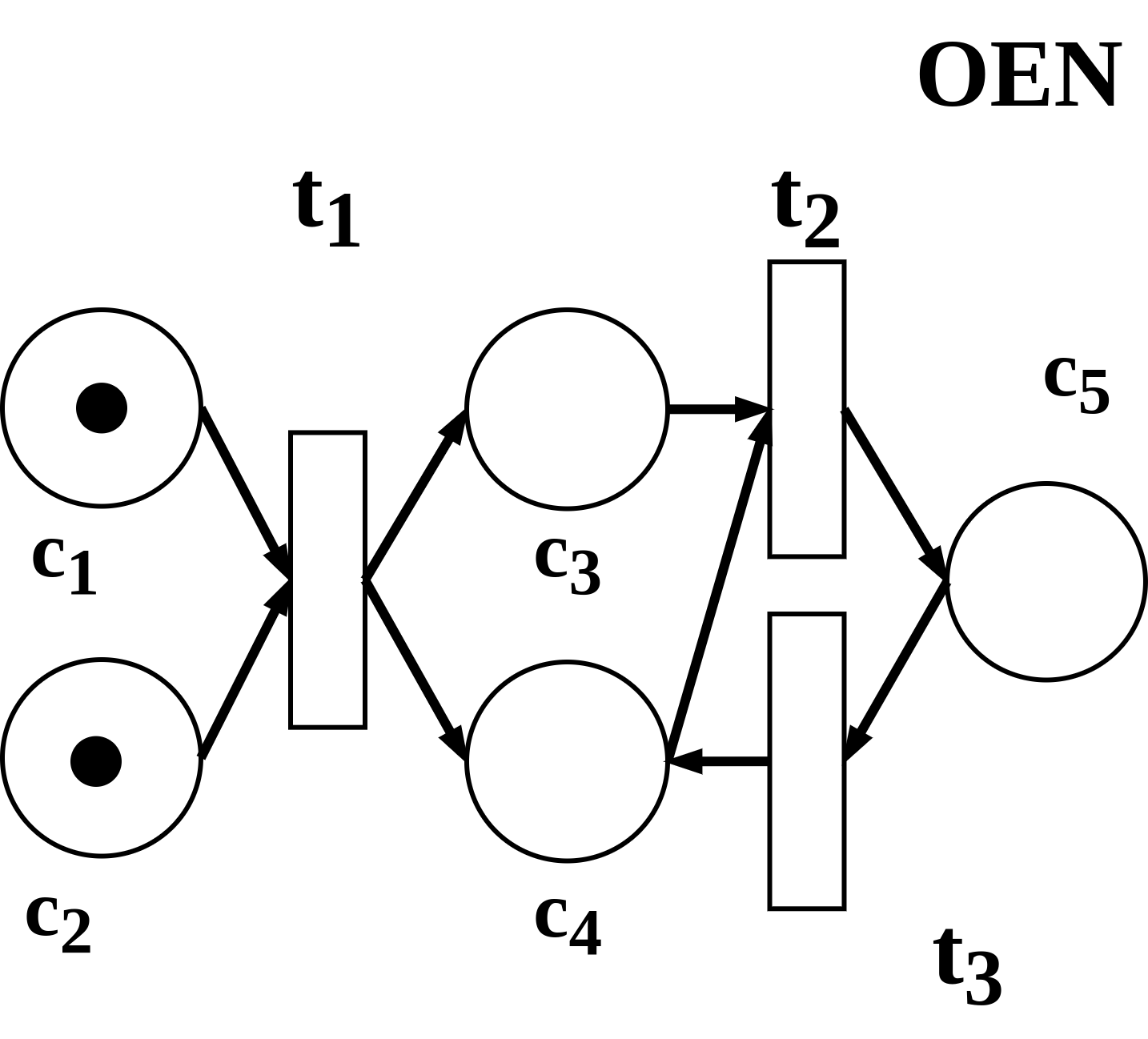}
\end{minipage}
\begin{minipage}{0.57\textwidth}
\begin{gather}
\mathcal{P}_{1}:\left\{
\begin{aligned}
    reaction\_1(c1\_c2, c3\_c4). &\\
    reaction\_1(c1\_c2, c3). &\\
    reaction\_1(c1\_c2, c4).&\\
    reaction\_2(c3\_c4, c5). &\\
    reaction\_3(c5, c4). &\\
    reaction(X,Y) \gets reaction\_1(X,Y). &\\
    reaction(X,Y) \gets reaction\_2(X,Y). &\\
    reaction(X,Y) \gets reaction\_3(X,Y). &\\
    pathway(X,Y) \gets reaction(X,Y). &\\ 
    pathway(X,Y) \gets reaction(X,Z), &\\
                pathway(Z,Y). & \nonumber
\end{aligned}
\right\}
\end{gather}
\end{minipage}

Each ground fact in the program describes a reaction between a set of reactants and a set of products. Evaluating the recursive program $pathway(X, Y)$ is equivalent to finding the union of token distributions that respect the viable reactions (see Appendix \ref{app:OEN_datalog}).

\section{Boolean matrix logic programming}
\label{sec:framework_bmlp}

We propose the Boolean Matrix Logic Programming (BMLP) approach, which uses Boolean matrices to evaluate datalog programs in contrast to the traditional symbolic logic program evaluation. 
\begin{definition} [Boolean Matrix Logic Programming (BMLP) problem]
    Let $\mathcal{P}$ be a datalog program containing a set of clauses with predicate symbol $r$. The goal of Boolean Matrix Logic Programming (BMLP) is to find a Boolean matrix $\textbf{R}$ encoded by a datalog program such that $(\textbf{R})_{i,j}$ = 1 if $\mathcal{P} \models r(c_i, c_j)$ for constant symbols $c_i, c_j$ and $(\textbf{R})_{i,j}$ = 0 otherwise.
\end{definition}

A linear and immediately recursive datalog program $\mathcal{P}$ can be written as matrices for bottom-up evaluation \cite{ceri_datalog_1989} and evaluated by linear equations \cite{sato_linear_2017}.
We created a BMLP algorithm\footnote{At the time of submission, the algorithm had been implemented in SWI-Prolog (version 9.2).} called iterative extension (BMLP-IE) (Fig. \ref{fig:bmlp_ie}) to evaluate the datalog encoding of the metabolic network. BMLP-IE uses a combination of binary operations: Boolean matrix addition (ADD), multiplication (MUL) and equality (EQ) \cite{copilowish_matrix_1948}. This algorithm uses $O(n^2)$ binary operations given $n$ total metabolites in the metabolic network (Appendix \ref{app:bmlp_ie}).

\begin{figure}[t]
    \centering
    \begin{tabular}{cc}
        \begin{subfigure} {0.55\textwidth}
            \includegraphics[width=\linewidth]{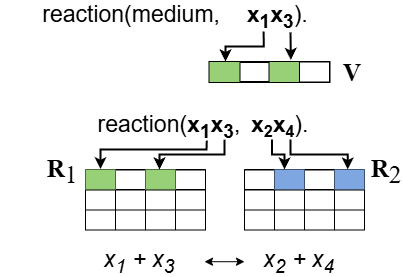}
            \caption{}
        \end{subfigure} & 
        \begin{subfigure} {0.35\textwidth}
            \includegraphics[width=\linewidth]{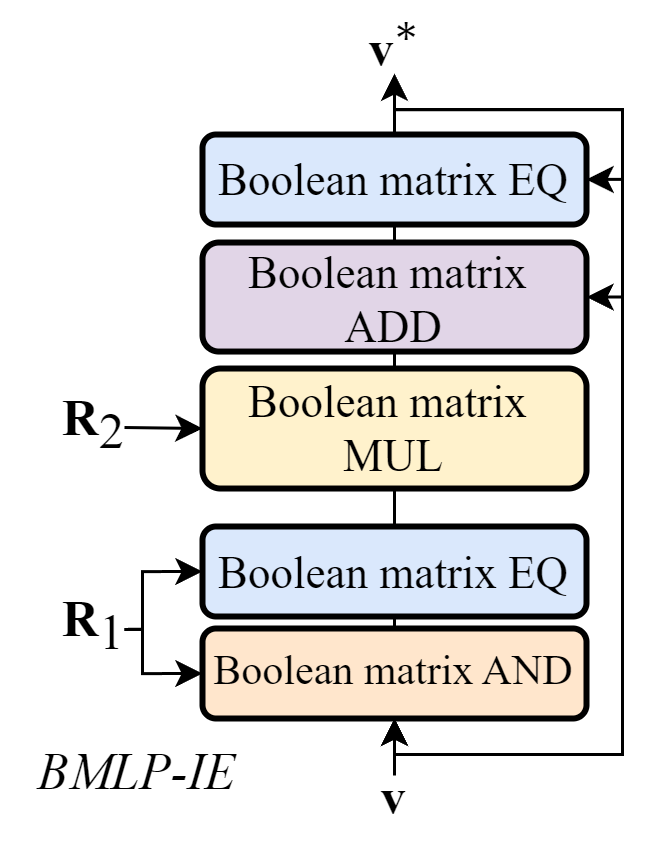} 
            \caption{}
        \end{subfigure} 
    \end{tabular}
    \caption{Iterative extension. (a) The vector $\textbf{v}$ encodes source chemical metabolites. All reactions are represented in the boolean matrices $\textbf{R}_1$ and $\textbf{R}_2$. (b) BMLP-IE computes $\textbf{v}^*$, the complete set of producible metabolites. }
    \label{fig:bmlp_ie}
\end{figure}

\section{Active learning}
\label{sec:framework_active_learning}
\subsection{Hypothesis compression}

We use the compression score of a hypothesis $h$ to optimise the posterior probability. A learner can derive the relative frequency $p(h)$ for $h$ to be chosen based on the encoding of $h$. The compactness of $h$'s encoding reflects its prior probability $p(h) = 2^{-size(h)}$ \cite{shannon_mathematical_1963}. We can use the compression function to compute the prior $p(h)$. Given a set of training examples $E$, the compression of a hypothesis \cite{progol,ase_progol} can be interpreted as 
\begin{flalign*}
    compression(h,E) = size(E) - MDL_{h,E} 
\end{flalign*}
In addition, the compression relates to the posterior probability of $h$ \cite{progol}:
\begin{flalign*}
    \frac{p(h|E)}{p(E|E)} = 2^{compression(h, E)}
\end{flalign*}
We can rearrange this to obtain the normalised posterior probability \cite{ase_progol}:
\begin{flalign*}
    p'(h|E) = \frac{2^{compression(h, E)}}{\sum_{h_i\in H} 2^{compression(h_i, E)}}
\end{flalign*}
The posterior probability is maximal when compression is maximal. By searching for a hypothesis with maximum posterior probability, a learner can maximise its expected predictive accuracy \cite{progol}. To learn an accurate hypothesis, the goal is to find a hypothesis with the highest compression score. 

The compression function in $BMLP_{active}$ follows the Minimal Description Length (MDL) principle  \cite{conklin_complexity-based_1994}. Since we have a fixed set of hypotheses without an explicit prior distribution, MDL allows us to make minimum assumptions about their prior. The most probable hypothesis $h$ for the training examples $E$ should be compact and have the fewest disagreements between $E$ and predictions made by $h$. This minimises $L(h) + L(E|h)$, where $L(h)$ is the descriptive complexity of $h$ and $L(E|h)$ is the descriptive complexity of encoding $E$ using $h$. We consider a compression function from \cite{ase_progol} based on the MDL principle:
\begin{flalign*}
    compression(h,E) = |E^+| - \frac{|E^+|}{pc_h} (size(h) + fp_h)
\end{flalign*}
where $E^+$ is the set of positive examples seen by the active learner, $pc_h$ is the number of positive predictions by $h$, and $fp_h$ is false positives covered by $h$. In other words, this compression function favours a general hypothesis with high coverage and penalises over-general hypotheses that incorrectly predict negative examples. The generality of $h$ can be estimated by computing its coverage $pc_h$ for a set of unlabelled instances \cite{progol}. We look at all available experiment instances to obtain a good estimation of the generality of hypotheses. Our BMLP approach can facilitate efficient inference for this estimation.  

\subsection{Expected cost of experiments}

Experiment planning is often restricted by time and other resources. A rule of thumb for experiment design is to achieve optimal empirical outcomes within a finite budget. The problem of designing experiments to support or refute scientific hypotheses involves a series of binary decisions. This process is analogous to playing the game of ``Guess Who'' where players ask a sequence of ``yes'' or ``no'' questions whose answers binarily partition the candidate hypotheses. The outcome of an experiment $t$ splits hypotheses $H$ into consistent hypotheses $H_{t}$ and inconsistent hypotheses $\overline{H_{t}}$. The path of selection and outcomes of experiments can be represented as a binary decision tree. To construct an optimal tree, an active learner should seek to label instances consistent with up to half of the hypotheses \cite{mitchell_generalization_1982}. 

In addition to this principle, we consider a user-defined experiment cost function $C_t$ of experiment $t$. The experiment selection is represented by a binary tree with paths annotated by the costs of experiments. The optimal selection is therefore a binary tree with the minimum expected overall cost. The minimum expected cost of performing a set of candidate experiments $T$ can be recurrently defined as:
\begin{flalign*}
    EC(\emptyset,T) = 0 & \\
    EC({h},T) = 0 & \\
    EC(H,T) = min_{t \in T} \, [C_t + p(t) EC (H_{t}, T - \{t\}) + (1 - p(t)) EC (\overline{H_{t}}, T - \{t\})]
\end{flalign*}
To estimate this recurrent cost function, $BMLP_{active}$ uses the following heuristic function from \cite{ase_progol} to approximate optimal cost selections:
\begin{flalign*}
    EC(H,T) \approx min_{t \in T} \, [C_t &+ p(t) (mean_{t' \in T - \{t\}} C_{t'}) J_{H_{t}} \\ \nonumber
    & + (1 - p(t)) (mean_{t' \in T - \{t\}} C_{t'}) J_{\overline{H_{t}}}]
\end{flalign*}
where $H_{t}$ and $\overline{H_{t}}$ are subsets of hypotheses $H$ consistent and inconsistent with $t$'s label. $p(t)$ is the probability that the outcome of the experiment $t$ is positive and $J_{H} = - \sum_{h \in H} p'(h|E) \, log_2 (p'(h|E))$. The probability $p'(h|E)$ is calculated from the compression function. To estimate $p(t)$, we compute the sum of the probabilities of the hypotheses consistent with a positive outcome of $t$. Users can flexibly define the experiment cost function $C_t$. 

\subsection{Active learning sample complexity}

For some hypothesis space $H$ and background knowledge $BK$, let $V_s$ denote the version space of hypotheses consistent with $s$ training examples. For an active version space learner that selects one instance per iteration, $|V_s|$ denotes the size of the version space at the iteration $s$ of active learning. The shrinkage of the hypothesis space can be represented by the reduction ratio $\frac{|V_{s+1}|}{|V_s|}$ after querying the $s+1$ label \cite{hocquette_how_2018}. The minimal reduction ratio $p(x_{s+1}, V_s)$ is the minority ratio of the version space $V_s$ partitioned by an instance $x_{s+1}$.

\begin{definition} (\textbf{Minimal reduction ratio} \cite{hocquette_how_2018}) The minimal reduction ratio over the version space $V_s$ by sampled instance $x_{s+1}$ is 
    \begin{flalign*}
        p(x_{s+1}, V_s) = \frac{min(\, |\{h \in V_s | \,h \cup BK \models x_{s+1}\}|, |\{h \in V_s | \,h \cup BK \not\models x_{s+1}\}|)}{|V_s|} 
    \end{flalign*}
\end{definition}

The minimal reduction ratio of an actively selected instance can be computed before it is labelled. While in reality training examples might not be as discriminative, the optional selection strategy is to select instances with minimal reduction ratios as close as possible to $\frac{1}{2}$ with the ability to eliminate up to 50\% of the hypothesis space. We describe the sample complexity advantage of active learning over random example selection by the following bound on an active version space learner's sample complexity. A passive learner is a learner using random example selection and it does not have control over the training examples it uses. Theorem \ref{theorem:active_learning} (proof in Appendix \ref{app:sample_complexity}) says that the number of instances needed to select by active learning should be some factor smaller than the number of randomly sampled examples given some predictive accuracy level. 

\begin{theorem}[Active learning sample complexity bound]
    For some $\phi \in [0, \frac{1}{2}]$ and small hypothesis error $\epsilon > 0$, if an active version space learner can select instances to label from an instance space $\mathcal{X}$ with minimal reduction ratios greater than or equal to $\phi$, the sample complexity $s_{active}$ of the active learner is 
    \begin{flalign}
        s_{active} \leq \frac{\epsilon}{\epsilon + \phi} s_{passive} + c
        \label{eq:sample_complexity_bound}
    \end{flalign}
    where c is a constant and $s_{passive}$ is the sample complexity of learning from randomly labelled instances.
    \label{theorem:active_learning}
\end{theorem}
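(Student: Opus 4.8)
The plan is to treat both learners as version-space learners starting from $V_0 = H$ and to track the \emph{bad} hypotheses $B_s = \{h \in V_s : err(h) > \epsilon\}$, where $err(h)$ is the probability under the instance distribution that $h \cup BK$ disagrees with the target behaviour on a random instance; each learner's sample complexity is the first $s$ at which $B_s = \emptyset$ (at a fixed confidence $1-\delta$, with $\delta$, $|H|$ and $\phi$ treated as constants of the problem, so that ``$c$'' is a constant in the accuracy parameter $\epsilon$). The statement then follows from comparing how fast $|B_s|$ decays under random versus active sampling.

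For the passive learner this is elementary: a uniformly random instance is consistent with a fixed bad hypothesis $h$ with probability $1 - err(h) < 1-\epsilon$, so after $s$ i.i.d.\ examples $\Pr[h \in B_s] < (1-\epsilon)^s$, and a union bound over the at most $|H|$ bad hypotheses gives $\Pr[B_s \ne \emptyset] < |H|(1-\epsilon)^s$. Hence $B_s$ is empty once $s \gtrsim \tfrac{\ln|H| + \ln(1/\delta)}{-\ln(1-\epsilon)}$, and together with the matching lower bound for consistency-based PAC learning of a finite realisable class this pins down $s_{passive} = \Theta\!\big(\tfrac{1}{\epsilon}(\ln|H| + \ln\tfrac1\delta)\big)$; in particular $s_{passive} \ge \tfrac{c_0}{\epsilon}\ln\tfrac{|H|}{\delta}$ for a universal constant $c_0$.

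For the active learner I would combine two effects of each query. By hypothesis the learner can pick an instance whose minimal reduction ratio over $V_s$ is at least $\phi$; since the minority side of the induced partition then has size $\ge \phi|V_s|$, the side surviving the target's label has size $\le (1-\phi)|V_s|$, so the deliberate cut discards a $\ge\phi$ fraction of $V_s$ (hence of the bad hypotheses it contains) at every step. The target's answer additionally discards every bad hypothesis disagreeing with it on the queried instance, an extra source of shrinkage at rate $\ge\epsilon$. Taken together, $|B_{s+1}| \le (1-\phi)(1-\epsilon)|B_s| \le \big(1-(\phi+\epsilon)\big)|B_s|$, so $\Pr[B_s \ne \emptyset] < |H|(1-\phi-\epsilon)^s$ and, using $-\ln(1-y) \ge y$, $s_{active} \le \big\lceil \tfrac{\ln|H|+\ln(1/\delta)}{-\ln(1-\phi-\epsilon)} \big\rceil \le \tfrac{\ln|H|+\ln(1/\delta)}{\phi+\epsilon} + 1$.

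Combining the estimates, $s_{active} \le \tfrac{\ln|H|+\ln(1/\delta)}{\epsilon+\phi} + 1 \le \tfrac{1}{c_0}\cdot\tfrac{\epsilon}{\epsilon+\phi}\, s_{passive} + 1$, and since $\tfrac{\epsilon}{\epsilon+\phi}\, s_{passive}$ is itself bounded by a constant once $\epsilon$ is small, the discrepancy factor $\tfrac1{c_0}$ and the ceiling fold into one additive constant $c$, giving $s_{active} \le \tfrac{\epsilon}{\epsilon+\phi}\, s_{passive} + c$; the boundary case $\phi = 0$ degenerates to $s_{active} \le s_{passive} + c$ (immediate, since the active learner can always imitate random sampling), while $\phi = \tfrac12$ recovers the near-exponential gain of perfect bisection. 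The step I expect to be the real obstacle is the claimed $(1-\phi)(1-\epsilon)$ decay of $|B_s|$: the reduction-ratio guarantee balances \emph{all} of $V_s$, not $B_s$, so a priori a $\phi$-balanced split could leave every bad hypothesis on the surviving side and buy nothing against $B_s$. To close this I would either argue that the learner can always choose a split that is simultaneously $\ge\phi$-balanced on $B_s$ (available whenever $B_s$ is a non-negligible fraction of $V_s$, with the few final steps where this fails charged to $c$), or run a direct ``bad-mass'' argument showing the union of the two discard events still removes a $\ge \phi + \epsilon - \phi\epsilon$ fraction of $B_s$; if both resist, one can retreat to the weaker bound $s_{active} \le \tfrac{\ln|H|}{-\ln(1-\phi)} + O(1)$, a constant in $\epsilon$ while $\tfrac{\epsilon}{\epsilon+\phi}\, s_{passive}$ tends to a positive constant as $\epsilon\to0$, which already gives the inequality at the price of a larger $c$ and of not explaining the $\epsilon+\phi$ shape.
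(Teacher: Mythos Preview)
Your route is the paper's route: bound $|V_s|\le(1-\phi)^s|H|$ from the reduction-ratio hypothesis, multiply by $(1-\epsilon)^s$ for the survival probability of an $\epsilon$-bad hypothesis, set the product equal to $\delta$, use $-\ln(1-u)>u$ to obtain $s_{active}<\frac{1}{\epsilon+\phi}\bigl(\ln|H|+\ln\frac{1}{\delta}\bigr)$, and then compare against the Blumer expression $\frac{1}{\epsilon}\bigl(\ln|H|+\ln\frac{1}{\delta}\bigr)$ for $s_{passive}$ to read off the ratio $\frac{\epsilon}{\epsilon+\phi}$.

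The gap you single out---that under actively chosen, non-i.i.d.\ queries the $(1-\epsilon)^s$ factor has no obvious justification, since a $\phi$-balanced split of $V_s$ need not be $\phi$-balanced on $B_s$---is not resolved in the paper either: it simply asserts $p_{error}=(1-\epsilon)^s|V_s|$ and proceeds, so you are not missing an argument the paper supplies. You are in fact more careful than the paper on the passive side: the paper invokes Blumer as a \emph{lower} bound, writing $s_{passive}>\frac{1}{\epsilon}\bigl(\ln|H|+\ln\frac{1}{\delta}\bigr)$, whereas the Blumer bound is an upper bound on sample complexity; your appeal to a matching PAC lower bound (your constant $c_0$, later absorbed into $c$) is the correct way to make the comparison go through.
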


Given a desirable predictive error $\epsilon$, the factor in Theorem \ref{theorem:active_learning} can be estimated from the minimal reduction ratio $\phi$ over the instance space. When selected instances have a larger minimal reduction ratio, the sample complexity gain would be more significant. On the other hand, if instances have low discriminative power and the minimal reduction ratio tends to zero, the sample complexity gain would be minimal. In Section \ref{sec:experiments}, we empirically show evidence supporting this theorem for $BMLP_{active}$ and randomly sampled experiments when learning gene function annotations in a GEM.

\section{Implementation}
\label{sec:implementation}

\begin{algorithm}[h]
    \caption{$BMLP_{active}$}
    \label{alg:active_learning}
    \textbf{Input}: Hypotheses $H$, background knowledge $BK$, experiment instances $T$, experimental cost budget $C$, total number of experiment $N$ and oracle $O$.\\
    \textbf{Output}: A hypothesis $h_{max} \in H$.
    \begin{algorithmic}[1] %[1] enables line numbers
        \STATE Let $T_{selected} = \emptyset$, $V_0 = H$
        \WHILE{$cost(T_{selected}) \leq C$ and $|V_j| > 1$ and $|T_{selected}| < N$}
            \IF{$T_{selected} == \emptyset$}
                \STATE Select $t_j$ with the largest minimum reduction ratio $p(t_j,V_j)$
                \STATE \,\, from remaining experiments in $T$ to add to $T_{selected}$
            \ELSE
                \STATE Add experiment $t_j \in T$ that has the minimum cost $EC(V_j, T')$ to $T_{selected}$.
            \ENDIF
                \STATE break \textbf{if} $cost(T_{selected}) > C$
                \STATE Consult the label $l$ of $t_j$ from the oracle $O$
                \STATE $\textbf{R}_{i,j}$ = $BMLP\_phenotype\_prediction$($h_i \cup BK$, $t_j$) for all $h_i \in V_j$
                \STATE $V_{j+1} = \{h_i \, | \, h_i \in V_j \text{ and } \textbf{R}_{i,j} = l\}$
                \STATE $h_{max} = \mathrm{argmax}_{h_i \in V_{j+1}} \, compression(h_i,T_{selected})$
        \ENDWHILE
    \end{algorithmic}
\end{algorithm}

The input background knowledge is created from a GEM model, which is typically accessible from GEM repositories \cite{king_bigg_2016,li_gotenzymes_2023} and libraries \cite{schellenberger_quantitative_2011,ebrahim_cobrapy_2013}. The user provides a set of candidate experiments, hypothesis candidates and an experiment cost function. Labels of experiment instances are requested from a data source, e.g. a laboratory or an online dataset. Labelled experimental data are considered ground truths. Initially, we randomly shuffle all candidate experiments and select an experiment with the most discriminative power if its cost is under the budget. Alternatively, a discriminative experiment with the lowest cost can be selected for a fixed initialisation. 

 Active learning in $BMLP_{active}$ is described by Algorithm \ref{alg:active_learning}. A classical method to address errors in metabolic pathways involves auxotrophic growth experiments, where specific genes are deleted to render the organism incapable of synthesising essential compounds \cite{beadle_genetic_1941}. We obtain binary phenotypic classifications of cell growth comparable to an unedited wild-type strain. We identified all essential metabolites in the wild-type strain from the BMLP-IE output. A positive label (1) is a phenotypic effect, and a negative label (0) is no phenotypic effect. We predict labels from the combinations of candidate experiments $T$ and hypotheses $H$ and store them in a Boolean matrix $\textbf{R}$ of size $|H| \times |T|$. Given the j-th actively selected instance $t_j \in T$ and $h_i \in V_{j} \subseteq H$, $(\textbf{R})_{i,j} = 1$ if $h_i \cup BK \models t_j$ and otherwise $(\textbf{R})_{i,j} = 0$. 

In each active learning cycle in $BMLP_{active}$, hypotheses inconsistent with ground truth experimental outcomes are pruned. $BMLP_{active}$ selects experiments to minimise the expected value of a user-defined cost function. This avoids repetitive predictions in hypothesis pruning and compression calculation. When computing hypothesis compression, we estimate the generality of a hypothesis by calculating its coverage for all unlabelled instances based on $\textbf{R}$. Currently, we only consider labels from synthetic or online phenotype data. However, $BMLP_{active}$ can be coupled with a high-throughput experimental workflow to automate experiments. 

\section{Experiments}
\label{sec:experiments}

\subsection{Experiment 1: BMLP-IE runtime and predictions}

\begin{table}[t]
\centering
	\begin{tabular}{c|c|c}
		& Single thread (seconds) & 20 CPUs (seconds) \\ & & \\
		\hline
		& & \\
		SWI-Prolog & $37.842 \pm 9.668$ & $5.089 \pm 0.119$\\ 
		& & \\ \hline
		& & \\
		\textbf{SWI-Prolog + BMLP-IE} & $\mathbf{0.220 \pm 0.099}$ & $\mathbf{0.061 \pm 0.008}$ \\ 
	\end{tabular}
\bigskip
\caption{Mean runtime of 100 predictions in CPU time. BMLP-IE leads to a 170 times improvement in prediction time efficiency. Multi-threading BMLP-IE enhances runtime efficiency 600 times compared to base SWI-Prolog. }
\label{table:runtime_improvement}
\end{table}

Here we demonstrate\footnote{All experiments are performed on a workstation with Intel(R) Core(TM) i9-7900X CPU \@ 3.30GHz (20 CPUs). } that a SWI-Prolog application can be made significantly faster by involving Boolean matrix computation. We randomly sampled batches of 100 experiments and computed the average wall time from 10 repeats (Appendix \ref{app:experiment1}). Table \ref{table:runtime_improvement} shows a 170 times improvement in prediction time by BMLP-IE compared to just using just SWI-Prolog. BMLP-IE uses only $\frac{1}{600}$ of the prediction time compared to predictions with just SWI-Prolog. This result demonstrates that BMLP-IE significantly improves the speed in predicting phenotypes from a genome-scale model. 

\begin{figure}[t]
\centering
\includegraphics[width=\textwidth]{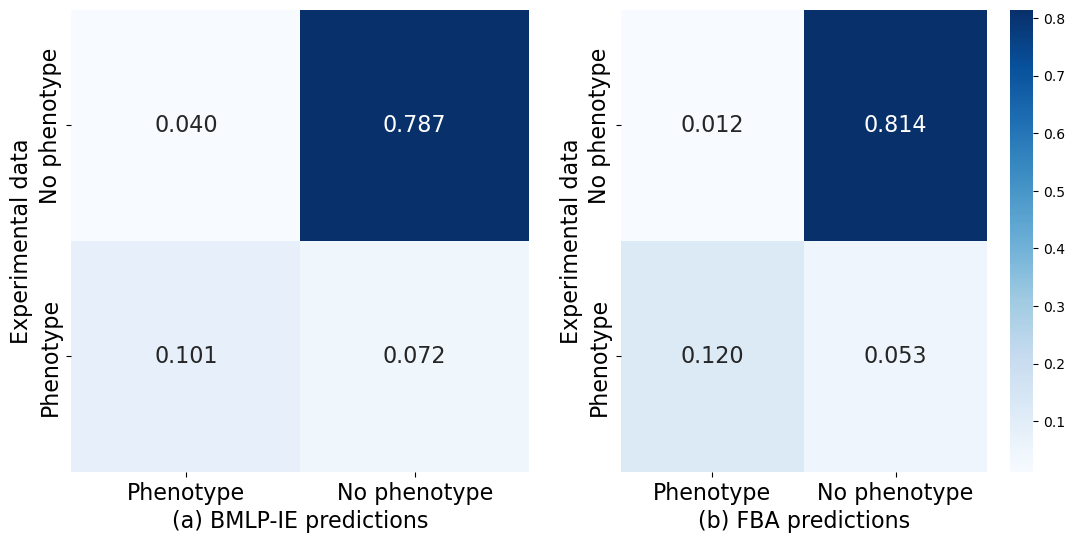}
\caption{Normalised confusion matrices. The predictive accuracy of BMLP-IE is $88.7\%$ and standard FBA is $93.4\%$. Each number has been normalised by the size of the experiment dataset in \cite{iML1515}. }
\label{fig:confusion_matrix}
\end{figure}

FBA \cite{palsson_systems_2015} is widely used in the literature to quantitatively predict the cell growth rate based on fluxes of biomass metabolites at the steady state. We evaluated BMLP-IE's phenotypic predictions against FBA based on the GEM model iML1515 \cite{iML1515}. We created normalised confusion matrices (Fig. \ref{fig:confusion_matrix}) for BMLP-IE and the FBA predictions with respect to experimental data \cite{iML1515}. The confusion matrices show a comparable predictive accuracy by BMLP-IE to that of standard FBA. Compared with FBA's growth rate predictions, BMLP-IE mainly misclassified reduced cell growth. BMLP-IE is easily adaptable when  GEM is extended, but FBA requires extensive manual parameter updates in order to tune metabolic fluxes.

\subsection{Experiment 2: active learning sample complexity and experiment cost}
\label{experiment2}

\begin{figure}[t]
\centering
\includegraphics[width=0.8\textwidth]{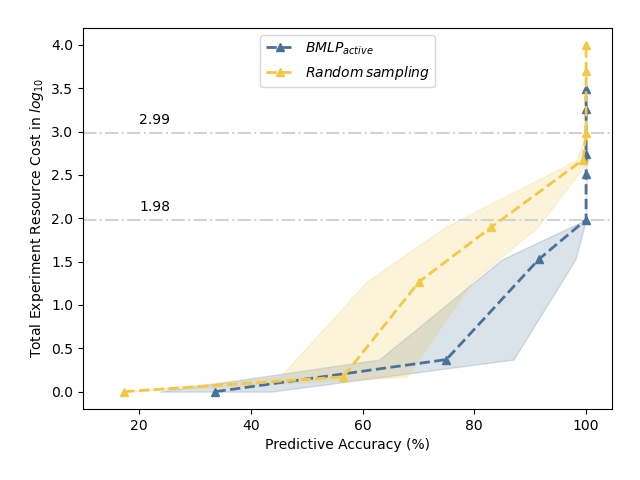}
\caption{Total experimentation cost needed for the recovered models to reach different levels of predictive accuracy. The cost definitions and details are in Appendix \ref{app:experiments} and \ref{app:appendix_cost}). The predictive accuracy of an empty hypothesis is 17.4\%.}
\label{fig:cost_reduction}
\end{figure}

We examined how selecting experiments intelligently improves data and cost efficiency in contrast to selecting experiments randomly\footnote{$BMLP_{active}$ actively selected experiments from the instance space to reduce the hypothesis space. The random selection method randomly sampled instances from the instance space. }. The number of experiments and data required are major determinants of experimental cost. We obtained synthetic data from iML1515 (Appendix \ref{app:experiment2}), and then masked function annotations in iML1515 to mock actual discoveries. We applied $BMLP_{active}$ and random experiment selection with compression to decide the final hypothesis for recovering iML1515. We tested the predictive accuracy of the recovered function hypotheses against synthetic data. Each selection method was repeated 10 times. 

Fig. \ref{fig:cost_reduction} shows that $BMLP_{active}$ spends 10\% of the cost used by random experiment sampling when converging to the correct hypothesis. The y-axis is the total experimental reagent cost in $log_{10}$ calculated from selected experiments and normalised by the cost of the cheapest optional nutrient. A predictive accuracy of 100\% indicates successful recovery of deleted gene function annotations. The correct annotations were recovered by $BMLP_{active}$ and random experiment selection with $10^{1.99}$ and $10^{2.99}$ experimental reagent costs. Given the current hypothesis space, experimental design space and our assumption that one gram per reagent was used, learning a single annotation required $\pounds 3.8$ and $\pounds 38$ experimental costs from $BMLP_{active}$ and random sampling. Consideration of differential resource factors in the user-defined experimental cost function could significantly increase the total experimental cost.

In addition, Fig. \ref{fig:sample_complexity_reduction} shows that $BMLP_{active}$ reduces the number of labelled data to learn accurate gene function annotations, compared to random experiment selection. While random experiment selection requires 25 experiments to recover the deleted annotations, $BMLP_{active}$ only needed 3 experiments. Fig. \ref{fig:cost_reduction} and Fig. \ref{fig:sample_complexity_reduction} show $BMLP_{active}$ can simultaneously optimise a user-defined experimental cost function and reduce the total number of experiments to learn gene function annotations accurately. $BMLP_{active}$ is highly flexible and could be tailored for specific experimental objectives in a discovery process.

\begin{figure}[t]
\centering
\includegraphics[width=0.9\textwidth]{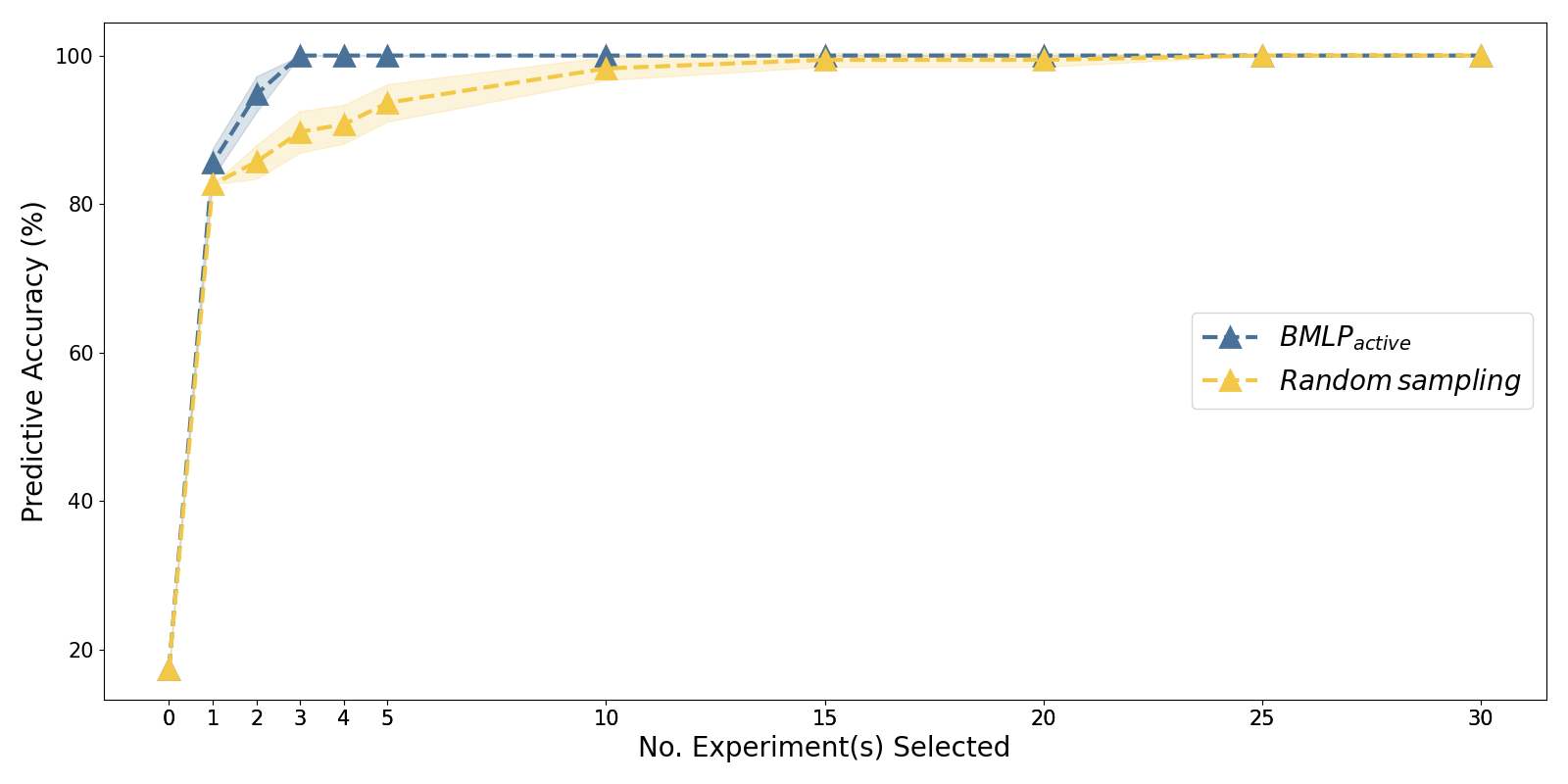}
\caption{Reduction of the number of experiments needed to recover gene function annotations (sample complexity reduction). $BMLP_{active}$ learns fully accurate gene function annotations with 3 experiments, while random sampling requires 25 experiments to do so. The predictive accuracy of an empty hypothesis is 17.4 \%. Predictive accuracy is higher for both methods initially since experiment selection is no longer constrained by reagent costs.}
\label{fig:sample_complexity_reduction}
\end{figure} 

\begin{figure}[t]
\centering
\includegraphics[width=\textwidth]{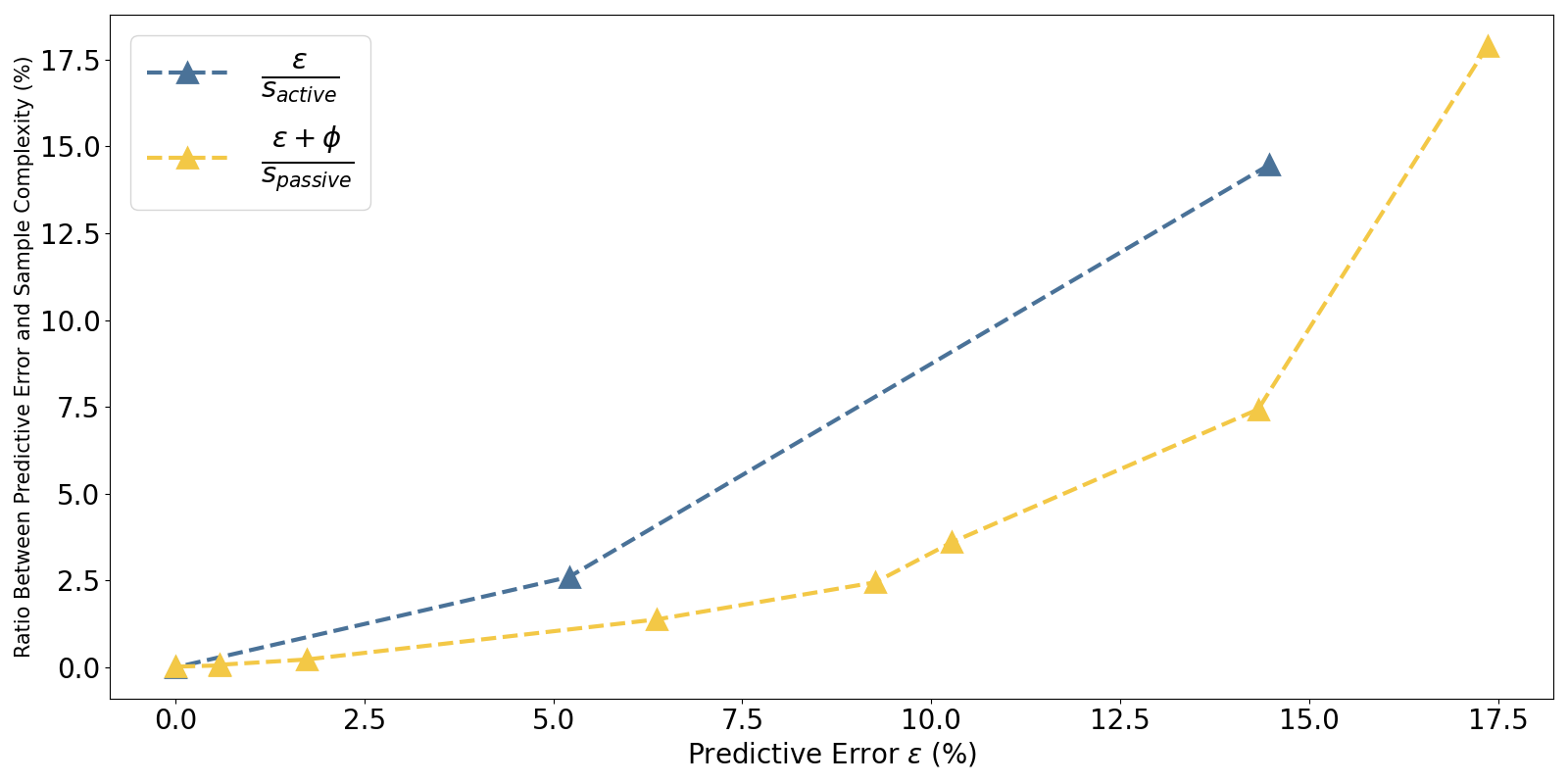}
\caption{Ratio of sample complexity. $BMLP_{active}$ selects instances with a minimal reduction ratio of at least 0.5\%. For error $\epsilon > 0$, the ratio of $BMLP_{active}$ is clearly above the ratio of random sampling. This shows that Theorem \ref{theorem:active_learning} correctly describes the sample complexity relationship between active learning and random sampling.}
\label{fig:error_ratio}
\end{figure} 

Fig. \ref{fig:error_ratio} shows the relationships between predictive error and the number of experiments in Theorem \ref{theorem:active_learning}. $BMLP_{active}$ selected experiments with minimal reduction ratio $\phi \geq 0.5\%$. The ratio $\frac{\epsilon}{s_{active}}$ is greater than $\frac{\epsilon + \phi}{s_{passive}}$ for error $\epsilon > 0$. There are fewer points on the active learning ratio curve since $BMLP_{active}$ converges to accuracy 100\% using fewer training examples. The two curves intersect at $\epsilon = 0$ since both converged to 100\% accuracy. Theorem \ref{theorem:active_learning} shows that active learning ($BMLP_{active}$) can reduce the requirement for training examples compared with passive learning (random experiment selection) to achieve a target learning performance. In the context of scientific discovery, the experimental cost to arrive at a finding can be reduced since each actively selected experiment is more informative.  

\subsection{Experiment 3: learning digenic functions}

\begin{figure}[t]
\centering
\includegraphics[width=0.9\textwidth]{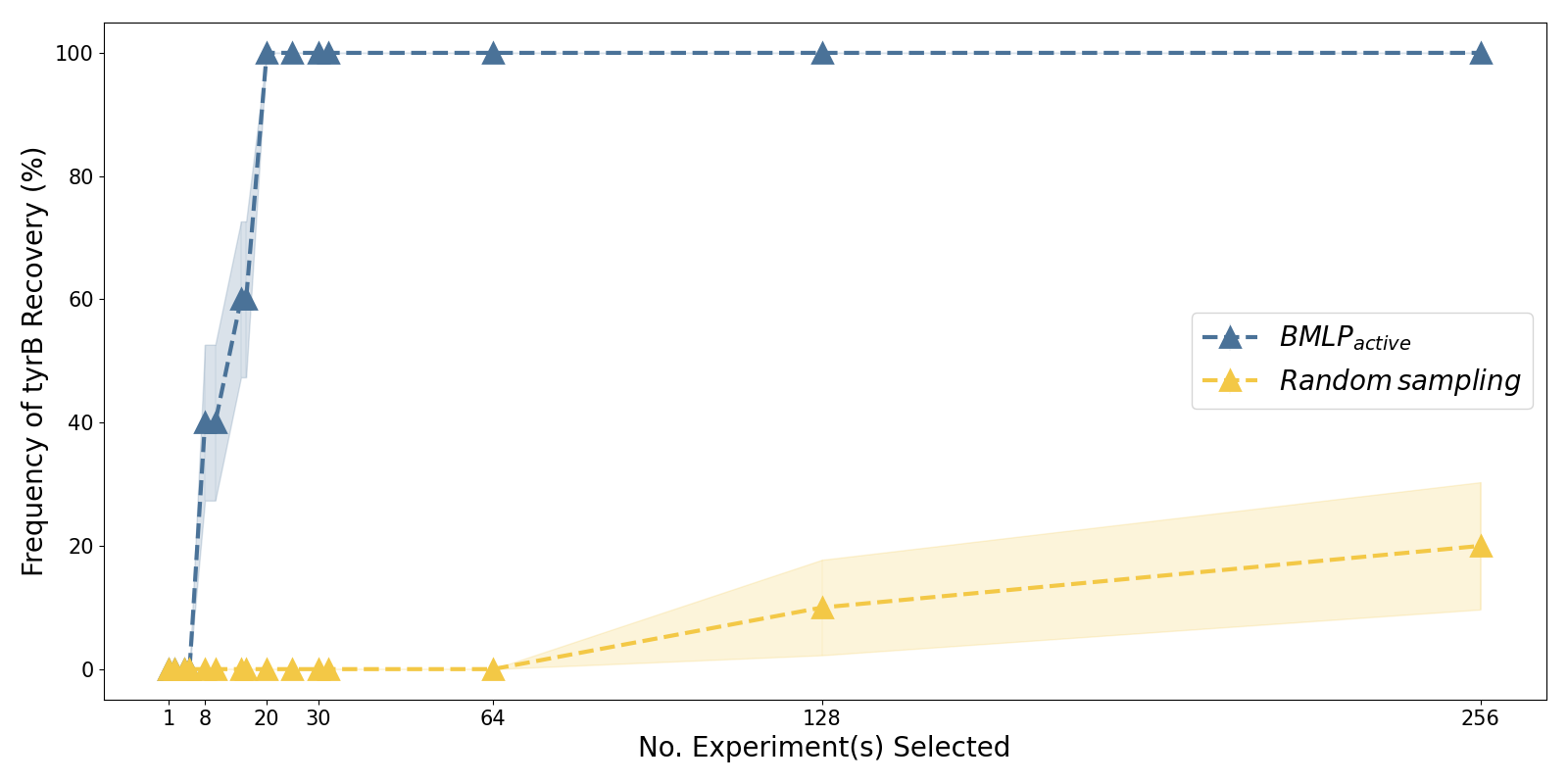}
\caption{tyrB isoenzyme function recovery frequency. We obtain the frequency of successful recovery from various randomisation seeds and compute the standard errors. We observe that within 64 experiments, random experiment selection does not recover the tyrB isoenzyme function. In contrast, $BMLP_{active}$ can recover this function with high frequency with 20 experiments. }
\label{fig:isoenzyme_recovery}
\end{figure} 

We show the recovery of a known gene-isoenzyme association, which presents a bigger empirical challenge due to the need to explore a larger experimental design space. We focused on the aromatic amino acid biosynthesis pathway. It is a critical pathway for microorganism survival that contains important gene targets for engineering aromatic amino acid production strains. Although this pathway has been intensively studied for decades, gaps remain in the understanding of gene-gene relations \cite{price_filling_2018}. Fig. \ref{fig:isoenzyme_recovery} shows the result of recovering tyrB's isoenzyme function. We observed that $BMLP_{active}$ successfully recovers the correct tyrB isoenzyme function in 20 experiments. This demonstrates $BMLP_{active}$'s ability to discover gene-isoenzyme associations in GEM. However, to learn all isoenzyme functions, a high-throughput experimentation procedure is required and we further discuss this in Section \ref{sec:conclusion}. 

$BMLP_{active}$ significantly reduced the number of experiments needed compared to random experiment selection. $BMLP_{active}$ provides higher information gain from each experiment and can guarantee recovery with as few as 20 experiments. While random sampling could prune most candidate hypotheses, it failed to eliminate competitive hypotheses further. In contrast to $BMLP_{active}$, random sampling could not recover this isoenzyme function with 64 experiments and could only do this occasionally with more than 10 times the number of experiments required by $BMLP_{active}$. Although we used a reduced set of genes in this demonstration, the combinatorial space was sufficiently large that random experiment selection became non-viable as an experimentation strategy in a discovery process. The remarkable increase in efficiency with $BMLP_{active}$ with this task demonstrates it has potential even as the experimental design space grows. \\

\section{Related work}
\label{sec:related_work}

\subsection{Active learning}

A typical approach to active learning is via membership queries, where the machine learner queries a data source to label instances according to some strategy. An active learner can be implemented to explore the version space which contains all hypotheses that are expressible by a hypothesis language and are consistent with a set of training examples \cite{mitchell_generalization_1982}. Instances are chosen according to a measure of informativeness such as entropy \cite{shannon_mathematical_1963}, diameter \cite{tosh_diameter_based_2017}, the size \cite{mitchell_generalization_1982,dasgupta_coarse_2005} and the shrinkage \cite{hocquette_how_2018} of the version space for eliminating competing hypotheses. For a binary classification task, when learners are allowed to look at $n$ unlabelled instances to compare, an active learner has $n$ times larger probability of selecting an instance with maximal entropy than a passive learner \cite{hocquette_how_2018}. The general principle for a binary active learner is to label instances closest to being consistent with half of the hypotheses in the version space \cite{mitchell_generalization_1982}. These instances reject up to half of the hypothesis space regardless of their classifications. In the optimal case when all instances are maximally discriminative, the target hypothesis can be found via a binary search with depth logarithmic of the hypothesis space size \cite{angluin_queries_2001}. $BMLP_{active}$ makes membership queries to simultaneously reduce the number of labels and the expected cost of experimentation. 

\subsection{Computational scientific discovery}

Computational discovery systems \cite{langley_scientific_1987,todorovski_integrating_2006,brunton_discovering_2016,guimera_bayesian_2020,petersen_deep_2020} are mostly responsible for formulating symbolic hypotheses from experimental results, which are not directly applicable for experimental planning. The automation of experimental design was investigated by the pioneering Robot Scientist \cite{King04:RobotScientist} to bring together logical reasoning, active learning and laboratory automation. The Robot Scientist automatically proposes explanatory hypotheses, actively devises validation experiments, performs experiments using laboratory robotics, and interprets empirical observations. It was combined with a laboratory robot to learn gene function annotations for aromatic amino acid pathways, and the active learning strategy significantly outperformed random experiment selection \cite{King04:RobotScientist}. This was extended to identify genes encoding orphan enzymes from a logical model of yeast consisting of 1200 genes \cite{king_automation_2009}. In comparison, we apply $BMLP_{active}$ to a genome-scale metabolic network, which is significantly larger than the models previously examined. In addition, we examined isoenzymes, which were previously unexplored by logic programming due to larger hypothesis and experimental design spaces.

Recently, large language models (LLMs) have been explored to automate experimental design via user-provided prompt texts \cite{boiko_autonomous_2023,bran_augmenting_2023}. LLMs are generative models of natural languages based on neural networks that are trained from textual data. In \cite{boiko_autonomous_2023,bran_augmenting_2023}, LLMs served as experiment planners and coding assistants. These platforms use LLMs to access publicly available knowledge bases, write and interpret programs to control laboratory hardware and analyse collected experimental data. The advantage is accessing rich information online and interacting with users while designing experiments. However, the faithfulness of their outputs raises concerns. It has been stressed that AI autonomous experimentation should be treated with caution, when there are no means to identify the source of error and explain observations \cite{ren_autonomous_2023}. $BMLP_{active}$'s logical representation and computation enhance faithfulness and output verifiability, which do not suffer from the ambiguity of natural languages. 

\subsection{Learning genome-scale metabolic network models}
GEMs contain known metabolic reactions in organisms, serving as functional knowledge bases for their metabolic processes. There has been increasing interest in integrating machine learning methods with GEMs to improve the predictions of gene-phenotype correlations within biological systems \cite{angione_human_2019,sen_deep_2021}. This often requires constraint-based modelling techniques for simulating GEMs \cite{rana_recent_2020}. Flux balance analysis (FBA) \cite{palsson_systems_2015} is a widely used mathematical tool to solve flux distributions in genome-scale models under the steady-state condition. FBA uses linear programming to model fluxes in a chemical reaction network represented by a matrix that contains the stoichiometric coefficients of the metabolites in reactions. Ensemble learning frameworks such as \cite{wu_rapid_2016,oyetunde_machine_2019,heckmann_machine_2018} obtained support vector machines, decision trees and artificial neural networks to represent hidden constraints between genetic factors and metabolic fluxes. These systems cannot identify inconsistencies between the simulation results and the experimental data to self-initialise the constraints to aid learning \cite{rana_recent_2020}. On the other hand, mechanistic information that reflects whole-cell dynamics has been incorporated to address these concerns, for instance in artificial neural networks to tune their parameters \cite{yang_white_box_2019}. The hybrid approach in \cite{faure_neural-mechanistic_2023} embedded FBA within artificial neural networks based on custom loss functions surrogating the FBA constraints. Recent research also explored autoencoders to learn the underlying relationship between gene expression data and metabolic fluxes through fluxes estimated by FBA \cite{sahu_advances_2021}. Machine learning of genome-scale metabolic networks has been significantly limited by the availability of training data \cite{sen_deep_2021}. $BMLP_{active}$ does not use FBA, so it is not dependent on metabolic flux constraints. $BMLP_{active}$ also performs active learning to reduce the cost and number of experiments. This is a notable advantage considering finite experimental resources and insufficient experimental data.

\subsection{Matrix algebraic approaches to logic programming} 

Obtaining the least Herbrand model of recursive datalog programs can be reduced to computing the transitive closure of Boolean matrices \cite{peirce_collected_1932,copilowish_matrix_1948}. Fischer and Meyer \cite{fischer_boolean_1971} studied a divide-and-conquer Boolean matrix computation technique by viewing relational databases as graphs. A similar approach was explored by Ioannidis \cite{ioannidis_computation_1986} for computing the fixpoint of recursive Horn clauses. An ILP system called DeepLog \cite{muggleton_hypothesizing_2023} employed Boolean matrices for learning recursive datalog programs. It repeatedly computes square Boolean matrices to quickly choose optimal background knowledge to derive target hypotheses. A propositional and Boolean matrices representation of biological knowledge bases was explored in \cite{sato_boolean_2021,sato_differentiable_2023} by considering gene regulatory networks as Boolean networks. In our BMLP framework, we use Boolean matrices to evaluate first-order recursive datalog programs. 

Recent work primarily studied the mapping of logic programs to linear equations over tensor spaces. SAT problems were investigated \cite{lin_satisability_2013} under a linear algebra framework, where truth values of domain entities, logical relations and operators for first-order logic are evaluated using tensors-based calculus \cite{grefenstette_towards_2013}. Computing algebraic solutions can approximate recursive relations in first-order Datalog \cite{sato_linear_2017}. Based on this approach, abduction can be performed by encoding a subset of recursive datalog programs in tensor space \cite{sato_abducing_2018}. Tensor representations allow differentiable infrastructures such as neural networks to perform probabilistic logical inferences \cite{cohen_tensorlog_2020}. This can be done using dynamic programming to propagate beliefs for stochastic first-order logic programs. The probabilistic parameters can be optimised using deep learning frameworks \cite{yang_differentiable_2017}. Limitations have been identified for solving certain recursive programs due to the difficulty of computing some arithmetic solutions, whereas these can be solved by iterative bottom-up evaluations \cite{sato_linear_2017}. In addition, algebraic approximations are usually calculated in linear algebraic methods where the correctness of the solutions cannot be guaranteed. However, our BMLP approach does not approximate Boolean matrices. 

\section{Conclusion and future work}
\label{sec:conclusion}

To our knowledge, $BMLP_{active}$ is the first logic programming system able to operate on a state-of-the-art GEM, utilising the improved computational efficiency of Boolean matrices. $BMLP_{active}$ learned accurate gene function annotations while reducing the experimental costs by 90\%. In addition, we demonstrated $BMLP_{active}$ actively learned gene-isoenzyme functions with 90\% fewer training data points than random experiment selection. This suggests that $BMLP_{active}$ could offset the growth in experimental design space from genome-scale function learning. Future work could explore the use of Boolean matrix encoding of logic programs for even larger networks with GPU acceleration.

One limitation of $BMLP_{active}$ is that it cannot quantitatively predict the behaviours of biological systems. Future work could focus on using Probabilistic Logic Programming \cite{de_raedt_probabilistic_2015} and tensor-based algebra of logic programs \cite{sakama_logic_2021} given numeric training data. $BMLP_{active}$ also does not invent new predicates. Reusing invented predicates can reduce the size of a hypothesis learned by an ILP system and improve its learning performance \cite{ILP30}. Recent methods \cite{muggleton_hypothesizing_2023,evans_learning_2018,dai_abductive_2021} have enabled predicate invention in the context of matrix-driven ILP. While we touched on the theoretical properties of active learning, comparisons with other active learning approaches and theoretical claims can further contextualise our performance.

Importantly, we show that our approach can address the interactions between multiple genes, which are not addressable by current mutagenesis approaches. We aim to integrate $BMLP_{active}$ into a high-throughput experimental workflow \cite{wang_pooled_2018} for whole-genome function learning. Experimental data can first be compared with phenotypic predictions, narrowing the scope of the hypothesis space. With advanced biological gene editing tools, we can target specific gene edits to experimentally confirm candidate hypotheses during active learning.

Designing experiments to explain some hypothesised phenomenon requires deliberate thinking by scientists. A helpful scientific assistant could teach scientists how to create experiment plans with optimal outcomes. This aspect of human-AI collaboration has been studied in an emerging research area on Ultra-Strong Machine Learning (USML) \cite{US2018,Ai2021,sequential_teaching,krenn_scientific_2022}. The present work can be extended to examine the effect of AI decision-making on human experiment selection and human comprehension of complex biological concepts.

\section*{Ethical Statement}
The datasets used do not contain any personal information or other sensitive content. There are no ethical issues.

\section*{Data Availability}
Source code and data are accessible on \\ https://github.com/lun-ai/BMLP\_active/. 

\section*{Acknowledgments}
The first, third and fourth authors acknowledge support from the UKRI 21EBTA: EB-AI Consortium for Bioengineered Cells \& Systems (AI-4-EB) award \\(BB/W013770/1). The second author acknowledges support from the UK’s EPSRC Human-Like Computing Network (EP/R022291/1), for which he acts as Principal Investigator. ChatGPT and Grammarly have been used as language editing tools after all intellectual content has been drafted.

\vspace{.5em}
\large
\bibliographystyle{abbrv}
\bibliography{active_abduction,bmlp}

\setcounter{section}{0}
\begin{appendix}

\section{One-bounded elementary net (OEN)}
\label{app:OEN}
Petri nets have wide applications in Computer Science \cite{reisig_understanding_2013}. Elementary nets \cite{rozenberg_elementary_1998} are a class of Petri nets. We summarise key concepts and notions of Petri net and elementary nets from \cite{reisig_understanding_2013,rozenberg_elementary_1998}. A Petri net contains two disjoint sets of nodes: places $P$ and transitions $T$. The flow relation $F \subseteq (P \times T) \cup (T \times P)$ describes arcs between nodes. All places can be marked by tokens which symbolise the resources distributed in the network. The assignment of tokens to places is called a marking (of tokens). For a transition $t$, the pre-set $^{\bullet}t$ = $\{ p \in P \, | \, pFt\}$ contains places whose tokens would be consumed by $t$ and its post-set $t^{\bullet}$ = $\{ p \in P \, | \, tFp\}$ contains places that would receive tokens from $t$. Markings represent the states of a Petri net. The weight of an arc $(p, t)$ or $(t, p)$ is conventionally written as $\overline{pt}$ or $\overline{tp}$. Elementary nets are basic Petri nets where transitions consume one token from the pre-set and relocate a token to the post-set. In an elementary net, an arc weight $\overline{pt} = 1 \text{ if } (p,t) \in F$ and $\overline{tp} = 1 \text{ if } (t,p) \in F$. An elementary net is a tuple $(P, T, F)$. 

A one-bounded elementary net (OEN) is a class of elementary nets that has places marked with at most one token \cite{reisig_understanding_2013}. A marking $m$ in OEN is a characteristic function $m: P \to \{0,1\}$ defined as:
\begin{gather}
m(x) = \left\{
\begin{aligned}
& 1 \qquad\,\,\,\,\, \text{if a token is assigned to } x \\ 
& 0 \qquad\,\,\,\,\, \text{otherwise} \nonumber
\end{aligned}
\right.
\end{gather}
A place $p$ is marked if $m(p) = 1$. An initial marking corresponds to the initial state of an elementary net and its tokens. A transition $t$ is enabled by $m$ if and only if for all arcs $(p,t)\in F$, $m(p) = 1$. A step $m \xrightarrow[]{t} m'$ is a fired transition resulting in a new marking according to: 
\begin{gather}
m'(p) = \left\{
\begin{aligned}
& m(p) - 1 \qquad\,\,\,\,\, \text{if } (p, t) \in F \\ 
& m(p) + 1 \qquad\,\,\,\,\, \text{if } (t, p) \in F \\
& m(p) \qquad\qquad\,\,\,\,\,\, \text{otherwise}
\end{aligned}
\right.
\label{eq:step_update}
\end{gather}
A marking $m'$ is reachable from a marking $m$ if and only a finite step sequence exists. A place is reachable from $m$ if and only if $m'$ is a reachable marking from $m$ and $m'(p) = 1$. A marking can be written as a row vector ${[} 1, 0, 1, ... {]}$ where each element indicates if the respective place has a token. A transition can also be represented by a row vector ${[} z_1, z_2, ..., z_k {]}$ where $z_i$ reflects the assignment of tokens at the pre-set and post-set places. The consecutive firings of transition update markings and the global state of the system. This represents the dynamic changes in the distribution of tokenised resources. 

\section{Equivalence between OEN and datalog}
\label{app:OEN_datalog}
Each auxotrophic mutant experiment is treated as a datalog query to find all reachable places based on an initial marking. We evaluate queries of OEN reachability by transforming the OEN into a linear and immediately recursive datalog program. A datalog program consisting of clauses with arity two can be considered a directed graph where edges joining nodes represent the relation between terms. Places and transitions are represented as constant and predicate symbols. Each argument of our transformed datalog program is a concatenation of constant symbols, e.g. $h2\_o2$. For our specific purpose, since all place nodes in the OEN are related to metabolites, every new concatenated constant symbol describes a set of metabolic substances. 

\begin{definition} [\textbf{OEN transformation}]
    The transformation takes an OEN and an initial marking $m$. Let $t$ denote any transition. Input places of transitions are represented by $p^*$. A datalog program $\mathcal{P}$ is written with predicate symbols $t$, $r_1$ and $r_2$ and variables X, Y and Z:
    \begin{itemize}
        \item write $token(m, p^*)$ if the marking $m$ includes all $p^*$
        \item write $token(m, q)$ if the marking $m$ includes a place $q$
        \item write $t(p_1^*,p_2^*)$ if $t$ connects $p_1^*$ and $p_2^*$
        \item write $t(p^*, q)$ if $t$ connects $p^*$ and a place $q$
        \item write $r_1(X,Y) \gets token(X,Y)$ and $r_1(X,Y) \gets t(X,Y)$ for every $t$ clause written 
        \item write $r_2(X,Y) \gets r_1(X, Y)$ and $r_2(X,Y) \gets r_1(X, Z), r_2(Z, Y)$
         \nonumber
    \end{itemize}
    \label{def:transformation}
\end{definition}
The transformed program is linear and immediately recursive since it contains only one recursive clause, and the recursive predicate appears only once in the head of this clause. We illustrate how reachable places from an initial marking can be computed by evaluating the transformed program. 
\begin{example}
\textit{We use the following OEN to represent a metabolic pathway. Transitions $t_1$, $t_2$ and $t_3$ are described by $reaction\_1$, $reaction\_2$ and $reaction\_3$ clauses. We create two nodes $c1\_c2$ and $c3\_c4$. These transitions connect places $\{c_3$, $c_4$, $c_5\}$ which are represented by constants $\{c3$, $c4$, $c5\}$. For an initial marking $m_1 = \{c_1, c_2\}$, we can construct a datalog program $\mathcal{P}_2$:}\\
\begin{minipage}{0.4\textwidth}
    \centering
    \includegraphics[width=0.9\linewidth]{figs/elementary_net_2.png}
    \label{fig:enter-label}
\end{minipage} 
\begin{minipage}{0.6\textwidth}
\begin{gather}
\mathcal{P}_2:\left\{
\begin{aligned} 
    token(m1,c1\_c2). &\\
    token(m1,c1).&\\
    token(m1,c2).&\\
    reaction\_1(c1\_c2, c3\_c4). &\\
    reaction\_1(c1\_c2, c3). &\\
    reaction\_1(c1\_c2, c4).&\\
    reaction\_2(c3\_c4, c5). &\\
    reaction\_3(c5, c4). &\\
    reaction(X,Y) \gets token(X,Y). &\\
    reaction(X,Y) \gets reaction\_1(X,Y). &\\
    reaction(X,Y) \gets reaction\_2(X,Y). &\\
    reaction(X,Y) \gets reaction\_3(X,Y). &\\
    pathway(X,Y) \gets reaction(X,Y). &\\ 
    pathway(X,Y) \gets reaction(X,Z), &\\
                pathway(Z,Y). & \nonumber
\end{aligned}
\right\}
\end{gather} 
\end{minipage}
\bigskip

\textit{The right-hand-side figures are the OEN represented by $\mathcal{P}_2$. The node $\{c_3, c_4\}$ is reachable from $\{c_1, c_2\}$ via the transition $t_1$. The place $c_5$ is reachable in two steps via transitions $t_1$ and $t_2$. We can compute these by evaluating groundings of $pathway(m1, Y)$ in the Herbrand model $M(\mathcal{P}_2)$ which gives us $pathway(m1,c2)$, $pathway(m1, c3\_c4)$ and $pathway(m1, c5)$. }
\label{ex:transformation}
\end{example}

We prove the correctness of computing reachable places in an OEN by evaluating the transformed datalog program. We first show the effect of a step (a fired transition) on updating markings. A step in an OEN updates the set of marked places. 

\begin{lemma}
    Let $OEN = (P,T,F)$. For a transition $t \in T$, marking $m$ and $m'$, $\land_{p_i\in ^{\bullet}t} m(p_i) = 1\to \land_{p_j \in t^{\bullet}} m'(p_j) = 1$.
    \label{lemma:step}
\end{lemma}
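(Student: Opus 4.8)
The plan is to prove the lemma by directly unfolding the two definitions from Appendix~\ref{app:OEN} on which it rests: enabledness of a transition and the step update rule in Equation~\ref{eq:step_update}, under the (implicit) understanding that $m'$ is the marking obtained by firing $t$ from $m$, i.e.\ $m \xrightarrow{t} m'$. The antecedent $\land_{p_i \in {}^{\bullet}t} m(p_i) = 1$ is, word for word, the condition that $t$ is enabled by $m$; hence the step is defined and $m'$ arises from $m$ via Equation~\ref{eq:step_update}. It then suffices to fix an arbitrary output place $p_j \in t^{\bullet}$ and show $m'(p_j) = 1$, since the conclusion $\land_{p_j \in t^{\bullet}} m'(p_j) = 1$ is merely the conjunction of these.

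First I would handle the principal case, where $p_j$ is an output place but not simultaneously an input place of $t$ --- which holds automatically for elementary nets satisfying the usual loop-freeness condition ${}^{\bullet}t \cap t^{\bullet} = \emptyset$. Since $(t, p_j) \in F$, the second clause of Equation~\ref{eq:step_update} gives $m'(p_j) = m(p_j) + 1$. Because $m'$ is again a marking of a one-bounded net, its values lie in $\{0,1\}$; this forces $m(p_j) = 0$, and therefore $m'(p_j) = 1$. Equivalently, one may read one-boundedness as folding the requirement that $m(q) = 0$ for every $q \in t^{\bullet}$ into the very notion of an enabled transition, in which case the same conclusion is immediate.

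If loop-freeness is not assumed, I would additionally cover the degenerate case $p_j \in {}^{\bullet}t \cap t^{\bullet}$: there $m(p_j) = 1$ by the antecedent, and firing $t$ both removes and re-adds a token at $p_j$, so its net token count is unchanged and $m'(p_j) = 1$. In every case the chosen output place is marked by $m'$; since $p_j$ was arbitrary, $\land_{p_j \in t^{\bullet}} m'(p_j) = 1$, as claimed.

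The step I expect to be the \textbf{main obstacle} is not a calculation but a matter of being precise about conventions: the raw update $m(p_j) + 1$ in Equation~\ref{eq:step_update} can in principle leave $\{0,1\}$, so the argument must justify --- from the paper's own definition of a one-bounded elementary net --- that an enabled transition's output places are necessarily token-free beforehand. A secondary technicality is the overlap of the piecewise cases in Equation~\ref{eq:step_update} when a place is simultaneously consumed and produced, which is why loop-freeness (or the explicit net-zero check above) must be invoked to make the update well-defined.
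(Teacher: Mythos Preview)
Your proposal is correct and follows the same approach as the paper: both arguments rest on unfolding the step update rule in Equation~\ref{eq:step_update} to conclude that post-set places receive tokens when a transition fires. The paper's proof is a one-sentence appeal to this rule (``places in the post-set of a transition are assigned tokens''), whereas you work through the one-boundedness and loop-freeness technicalities explicitly; your version is strictly more careful, but the underlying idea is identical.
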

\begin{proof}
    It follows that all tokens at the places in the pre-set of a transition are consumed, and places in the post-set of a transition are assigned tokens.
\end{proof}
Lemma \ref{lemma:step} shows that when a transition $t$ fires, in the new marking, tokens are re-assigned to other places. Given a marking, the set of marked places is a subset of $P$. The set of all token-to-place assignments is a power set $\Omega = 2^P$ and it is the union of all possible markings. From the property shown in Lemma \ref{lemma:step}, we define an operator called step consequence $\mathcal{T}_F: \Omega \to \Omega$.

\begin{definition} \textbf{(Step consequence operator $\mathcal{T}_F$)} Given an OEN = (P, T, F), $\mathcal{T}_F$ is an operator over the power set $\Omega$:
    \begin{flalign}
        \forall S \in \Omega, \mathcal{T}_F(S) = \{ p_j \in P \, | \, p_j \in S' \gets \land_{p_i\in {^\bullet}t} p_i \in S &\nonumber \\ 
        \text{ where} \, \{p_1, ..., p_n \} \subseteq S \text{ and } p_j \in t^{\bullet}, S' \in \Omega \}& \nonumber
    \end{flalign}
    \label{def:step_operator}
\end{definition} 

Definition 3 says that the result of applying $\mathcal{T}_F$ to a marking is a set of places reachable from it. $\mathcal{T}_F(\emptyset)$ only includes the initial marking due to the antecedent of the implication being false. To obtain the union of all places reachable from an empty marking, we can apply $\mathcal{T}_F$ repeatedly to its results. The recursive application of $\mathcal{T}_F$ can be defined as $\mathcal{T}_F{\uparrow^0} = \emptyset$ and $\mathcal{T}_F{\uparrow^n} = \mathcal{T}_F(\mathcal{T}_F{\uparrow^{n-1}})$. This corresponds to a lattice.

\begin{proposition}
    $(\mathcal{T}_F, \subseteq)$ is a complete lattice.
    \label{proposition:lattice}
\end{proposition}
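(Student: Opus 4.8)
The plan is to read $(\mathcal{T}_F,\subseteq)$ as the poset of markings $(\Omega,\subseteq)$ on which the step-consequence operator $\mathcal{T}_F$ acts, and to verify directly the three ingredients of a complete lattice: that the carrier is a well-defined set closed under $\mathcal{T}_F$, that $\subseteq$ is a partial order on it, and that every subfamily of the carrier has a supremum and an infimum. First I would fix the carrier: since a marking of an OEN is a characteristic function $m\colon P\to\{0,1\}$, markings are in bijection with subsets of $P$, so $\Omega = 2^P$ with $P$ finite (the places are finitely many metabolite sets). That $(\Omega,\subseteq)$ is a poset — reflexivity, antisymmetry, transitivity of inclusion — is immediate, with bottom element $\emptyset$ (the empty marking from which $\mathcal{T}_F{\uparrow^0}$ starts) and top element $P$.

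Next I would establish completeness by the standard power-set argument: for any family $\{S_k\}_{k\in K}\subseteq\Omega$, the union $\bigcup_{k}S_k$ is again a subset of $P$, hence lies in $\Omega$, and it is the least upper bound of the family (it contains each $S_k$ and is contained in every common upper bound); dually $\bigcap_{k}S_k\in\Omega$ is the greatest lower bound, with the convention that the empty intersection is $P$. Since arbitrary suprema and infima exist, $(\Omega,\subseteq)$ is a complete lattice. The one point requiring the OEN semantics — rather than pure set theory — is checking that $\mathcal{T}_F$ is genuinely an endofunction on $\Omega$, i.e. that applying $\mathcal{T}_F$ to a set of places yields a set of places and never an element outside $P$; this is read off Definition 3 together with Lemma \ref{lemma:step}, since a fired transition only re-assigns tokens among places of $P$.

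I expect no analytic obstacle here: the ``completeness'' is the textbook fact that a power set ordered by inclusion is a complete lattice, and the only real care needed is bookkeeping — being explicit that the carrier is $2^P$ and that $\mathcal{T}_F$ does not leave it. For the sequel I would additionally record monotonicity of $\mathcal{T}_F$ (if $S\subseteq S'$, every transition enabled by $S$ is enabled by $S'$, so $\mathcal{T}_F(S)\subseteq\mathcal{T}_F(S')$); combined with the completeness just proved, the Knaster--Tarski theorem then yields a least fixpoint of $\mathcal{T}_F$, and Kleene's fixpoint theorem identifies it with $\bigcup_{n\geq 0}\mathcal{T}_F{\uparrow^n}$, which is precisely the union of reachable places targeted by the reachability computation.
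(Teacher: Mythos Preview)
Your proposal is correct and follows essentially the same approach as the paper: both read the carrier as the power set $\Omega=2^P$ ordered by inclusion and invoke the standard fact that such a poset is a complete lattice, with $\emptyset$ and $P$ as bottom and top. The paper's version is terser (it simply asserts the partial order on $\Omega$ and names the extremal elements), while you spell out the existence of arbitrary suprema and infima via unions and intersections and note that $\mathcal{T}_F$ is an endofunction on $\Omega$; your added remarks on monotonicity and Knaster--Tarski anticipate exactly what the paper does in the subsequent Lemma~\ref{lemma:monotonicity} and Proposition~\ref{proposition:fixpoint}.
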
 
\begin{proof}
    The proof follows from Definition \ref{def:step_operator}. $(\mathcal{T}_F, \subseteq)$ is a complete lattice since $\Omega$ is a set partially ordered by set inclusion $\subseteq$ where $\mathcal{T}_F$ is an operator over the power set $\Omega$. The top element $P$ is the union of elements in $\Omega$ and the bottom element $\emptyset$ is the intersection of elements in $\Omega$. 
\end{proof}

Let $(P,T,F)$ be an arbitrary OEN. Using Proposition \ref{proposition:lattice}, we show $\mathcal{T}_F$ over the lattice $(\mathcal{T}_F, \subseteq)$ has a fixpoint. We refer to Tarski's fixpoint theorem \cite{tarski_fixpoint}, which states a monotonic operator on a complete lattice has a least fixpoint. We prove that $\mathcal{T}_F$ is a monotonic operator.

\begin{lemma}
    $\mathcal{T}_F$ is a monotonic operator over $\Omega$. For all $S_1, S_2 \in \Omega$, if $S_1 \subseteq S_2$ then $\mathcal{T}_F(S_1) \subseteq \mathcal{T}_F(S_2)$.
    \label{lemma:monotonicity}
\end{lemma}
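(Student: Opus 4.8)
The plan is to unfold Definition \ref{def:step_operator} and reduce the claim to the obvious fact that the enabling condition of a transition is preserved when the current set of marked places grows. First I would fix arbitrary $S_1, S_2 \in \Omega$ with $S_1 \subseteq S_2$ and take an arbitrary $p_j \in \mathcal{T}_F(S_1)$; the goal is then to show $p_j \in \mathcal{T}_F(S_2)$, since establishing $\mathcal{T}_F(S_1) \subseteq \mathcal{T}_F(S_2)$ for all such pairs is exactly the statement of monotonicity.

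By Definition \ref{def:step_operator}, membership of $p_j$ in $\mathcal{T}_F(S_1)$ witnesses a transition $t \in T$ whose whole pre-set $^{\bullet}t$ is contained in $S_1$ and with $p_j \in t^{\bullet}$; the contribution of the initial marking, which $\mathcal{T}_F$ supplies independently of its argument, is common to $\mathcal{T}_F(S_1)$ and $\mathcal{T}_F(S_2)$ and can be dispatched in one line. Since $S_1 \subseteq S_2$, transitivity of $\subseteq$ gives $^{\bullet}t \subseteq S_2$, so the very same transition $t$ is enabled on $S_2$. Applying Definition \ref{def:step_operator} in the other direction with this $t$ and the same $p_j \in t^{\bullet}$ yields $p_j \in \mathcal{T}_F(S_2)$. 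As $p_j$ was arbitrary, $\mathcal{T}_F(S_1) \subseteq \mathcal{T}_F(S_2)$.

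The argument is essentially a one-step set-inclusion chase, so I do not anticipate a genuine obstacle; the only point requiring care is the bookkeeping around Definition \ref{def:step_operator}'s notation — making explicit that ``$\land_{p_i \in {}^{\bullet}t}\, p_i \in S$'' is precisely the enabling condition and is monotone in $S$, and handling the initial-marking term uniformly across $S_1$ and $S_2$. Once those are pinned down, monotonicity follows immediately, and then, combined with Proposition \ref{proposition:lattice} and Tarski's fixpoint theorem, it delivers the existence of a least fixpoint of $\mathcal{T}_F$.
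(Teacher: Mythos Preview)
Your proposal is correct and, like the paper, grounds the claim directly in Definition~\ref{def:step_operator}; the element-chase you give --- showing that the enabling condition $^{\bullet}t \subseteq S$ is upward-closed in $S$ and that the initial-marking contribution is independent of $S$ --- is precisely the content required. The paper's own proof is a terser appeal to the definition whose phrasing (``places in the current marking are retained after applying $\mathcal{T}_F$'') actually reads more as an argument for $S \subseteq \mathcal{T}_F(S)$ than for monotonicity, so your explicit unpacking is if anything an improvement in rigor.
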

\begin{proof}
    This follows from Definition \ref{def:step_operator}. We will always extend a marking by adding places reachable from it. Places in the current marking are reachable and are retained after applying $\mathcal{T}_F$. Therefore, $\mathcal{T}_F$ is a monotonic operator.
\end{proof}

We show that the least fixpoint of $\mathcal{T}_F$ is the union of reachable places by proving that $\mathcal{T}_F$ is a compact operator.

\begin{lemma}
    $\mathcal{T}_F$ is a compact operator over $\Omega$. For all $S_1 \in \Omega$, if $p \in \mathcal{T}_F(S_1)$ then $p \in \mathcal{T}_F(S_2)$ for a finite $S_2 \subseteq S_1$.
    \label{lemma:compactness}
\end{lemma}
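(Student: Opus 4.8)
The plan is to unfold Definition \ref{def:step_operator} and, for each $p \in \mathcal{T}_F(S_1)$, exhibit an explicit finite witness $S_2 \subseteq S_1$ with $p \in \mathcal{T}_F(S_2)$, splitting into cases according to how $p$ arises in $\mathcal{T}_F(S_1)$. Before the case split I would make explicit the inflationary (retention) property $S \subseteq \mathcal{T}_F(S)$ for every $S \in \Omega$, which is exactly the observation already used in the proof of Lemma \ref{lemma:monotonicity} (``places in the current marking are reachable and are retained after applying $\mathcal{T}_F$''); it follows from Definition \ref{def:step_operator} because the implication defining membership may be satisfied vacuously, leaving the current marking in the image. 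Both cases below rely on this.

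First, if $p \in S_1$, take $S_2 = \{p\}$, which is a finite subset of $S_1$. By the retention property $S_2 \subseteq \mathcal{T}_F(S_2)$, hence $p \in \mathcal{T}_F(S_2)$, as required. Second, if $p \notin S_1$, then by Definition \ref{def:step_operator} there is a transition $t \in T$ with pre-set $^{\bullet}t \subseteq S_1$ and post-set containing $p$, i.e. $p \in t^{\bullet}$. Take $S_2 = {}^{\bullet}t$; since $P$ is finite in an OEN, $S_2$ is a finite subset of $S_1$. The same transition $t$ now witnesses $p \in \mathcal{T}_F(S_2)$, because $^{\bullet}t \subseteq S_2$ and $p \in t^{\bullet}$. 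In either case we have produced a finite $S_2 \subseteq S_1$ with $p \in \mathcal{T}_F(S_2)$, so $\mathcal{T}_F$ is compact.

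I do not expect a real obstacle here; the only point needing care is recording the retention property $S \subseteq \mathcal{T}_F(S)$ cleanly, since it is what makes the $p \in S_1$ case go through and it is only implicit in Definition \ref{def:step_operator}. Once that is fixed, the finite witness is simply the pre-set of the transition generating $p$ (or the singleton $\{p\}$ when $p$ was already present in $S_1$), and finiteness is immediate from $|P| < \infty$.
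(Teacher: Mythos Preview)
Your proposal is correct and follows essentially the same approach as the paper: both arguments take the pre-set ${}^{\bullet}t$ of a witnessing transition as the finite subset $S_2 \subseteq S_1$. Your explicit case split on whether $p \in S_1$ (handling retention via $S_2=\{p\}$) is in fact more careful than the paper's version, which tacitly assumes a transition always exists and contains a minor slip in the direction of inclusion when introducing $S_2$.
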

\begin{proof}
    Assume $p \in \mathcal{T}_F(S_1)$. From Definition \ref{def:step_operator} and the assumption, since $\mathcal{T}_F$ performs a step, a transition $t$ exists and $S_1$ must contain all pre-set places. Take places $S_2$ that contains $S_1$. From Lemma 1, we know that firing $t$ give $S' \subseteq \mathcal{T}_F(S_2)$ and $p \in S'$.
\end{proof}

\begin{proposition}
    The operator $\mathcal{T}_F$ has a least fixpoint equal to $\mathcal{T}_F{\uparrow^\alpha} = \cup_{n \geq 0}\mathcal{T}_F{\uparrow^n}$.
    \label{proposition:fixpoint}
\end{proposition}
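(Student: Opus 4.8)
The plan is to run the standard Kleene-style fixpoint argument, exactly analogous to the van Emden--Kowalski treatment of the immediate consequence operator $T_P$ in logic programming, now carried out for $\mathcal{T}_F$. All three ingredients needed are already available: $(\mathcal{T}_F, \subseteq)$ is a complete lattice (Proposition \ref{proposition:lattice}), $\mathcal{T}_F$ is monotonic (Lemma \ref{lemma:monotonicity}), and $\mathcal{T}_F$ is compact (Lemma \ref{lemma:compactness}). Tarski's fixpoint theorem already guarantees that a least fixpoint exists; the work is to identify it with the iterative union $\mathcal{T}_F{\uparrow^\alpha} = \bigcup_{n \geq 0}\mathcal{T}_F{\uparrow^n}$.

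First I would show that the iterates form an increasing chain $\emptyset = \mathcal{T}_F{\uparrow^0} \subseteq \mathcal{T}_F{\uparrow^1} \subseteq \cdots \subseteq \mathcal{T}_F{\uparrow^n} \subseteq \cdots$ by induction on $n$: the base case $\mathcal{T}_F{\uparrow^0} = \emptyset \subseteq \mathcal{T}_F{\uparrow^1}$ is immediate, and the inductive step follows by applying monotonicity to $\mathcal{T}_F{\uparrow^{n-1}} \subseteq \mathcal{T}_F{\uparrow^{n}}$. Since $\Omega = 2^P$ is a complete lattice, the union $\mathcal{T}_F{\uparrow^\alpha}$ exists and is the least upper bound of this chain.

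Next I would verify that $\mathcal{T}_F{\uparrow^\alpha}$ is a fixpoint, i.e. $\mathcal{T}_F(\mathcal{T}_F{\uparrow^\alpha}) = \mathcal{T}_F{\uparrow^\alpha}$. The inclusion $\mathcal{T}_F{\uparrow^\alpha} \subseteq \mathcal{T}_F(\mathcal{T}_F{\uparrow^\alpha})$ holds because $\mathcal{T}_F{\uparrow^{n+1}} = \mathcal{T}_F(\mathcal{T}_F{\uparrow^n}) \subseteq \mathcal{T}_F(\mathcal{T}_F{\uparrow^\alpha})$ by monotonicity for every $n$, and taking the union over $n$. For the reverse inclusion, suppose $p \in \mathcal{T}_F(\mathcal{T}_F{\uparrow^\alpha})$; by compactness there is a finite $S \subseteq \mathcal{T}_F{\uparrow^\alpha}$ with $p \in \mathcal{T}_F(S)$, and since $S$ is finite and the iterates form an increasing chain, $S \subseteq \mathcal{T}_F{\uparrow^k}$ for some $k$, whence $p \in \mathcal{T}_F(\mathcal{T}_F{\uparrow^k}) = \mathcal{T}_F{\uparrow^{k+1}} \subseteq \mathcal{T}_F{\uparrow^\alpha}$ by monotonicity. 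This step -- collapsing a finite witness set into a single level of the chain -- is the only place compactness is essential, and it is where I expect the most care to be required.

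Finally I would establish minimality: if $S$ is any fixpoint of $\mathcal{T}_F$, then an induction on $n$ gives $\mathcal{T}_F{\uparrow^n} \subseteq S$ (base: $\emptyset \subseteq S$; step: $\mathcal{T}_F{\uparrow^{n+1}} = \mathcal{T}_F(\mathcal{T}_F{\uparrow^n}) \subseteq \mathcal{T}_F(S) = S$ using monotonicity and the fixpoint property of $S$), so $\mathcal{T}_F{\uparrow^\alpha} = \bigcup_{n\geq 0}\mathcal{T}_F{\uparrow^n} \subseteq S$. Hence $\mathcal{T}_F{\uparrow^\alpha}$ is a fixpoint contained in every fixpoint, i.e. the least one, which is precisely the claim.
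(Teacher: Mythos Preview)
Your proposal is correct and uses the same ingredients as the paper --- Tarski's theorem for existence, monotonicity for the chain, and compactness to pull any $p \in \mathcal{T}_F(\mathcal{T}_F{\uparrow^\alpha})$ down to a finite stage --- only organised as the standard Kleene argument (prove the union is a fixpoint, then prove minimality) whereas the paper first names the Tarski least fixpoint and then uses compactness to show it coincides with $\bigcup_{n\ge 0}\mathcal{T}_F{\uparrow^n}$. Your version is in fact more carefully structured than the paper's rather compressed proof; nothing is missing.
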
 
\begin{proof}
    $\mathcal{T}_F$ is a monotonic function. According to Tarski's fixpoint theorem \cite{tarski_fixpoint}, the function $\mathcal{T}_F$ has a least fixpoint $\mathcal{T}_F{\uparrow^\alpha}$ for some $\alpha \geq 0$. Since $\mathcal{T}_F$ is also a compact operator, if $p \in \mathcal{T}_F(\mathcal{T}_F{\uparrow^\alpha})$ then $p \in \mathcal{T}_F(\mathcal{T}_F{\uparrow^n})$ for some $n \geq 0$ and  $\mathcal{T}_F{\uparrow^n} \subseteq \mathcal{T}_F{\uparrow^\alpha}$. The union of subsets of $\mathcal{T}_F{\uparrow^\alpha}$ is $\cup_{n \geq 0}\mathcal{T}_F{\uparrow^n}$, which covers every member of the least fixpoint. Therefore, $\mathcal{T}_F{\uparrow^\alpha} = \cup_{n \geq 0}\mathcal{T}_F{\uparrow^n}$.
\end{proof}

Proposition \ref{proposition:fixpoint} shows that when we repeatedly apply the step consequence operator to the empty set, the least fixpoint is the union of reachable places. This allows us to compute this least fixpoint by iteratively computing steps and expanding the set of reachable places. We then show the correctness of finding reachable places in an OEN via evaluating the transformed linear and immediately recursive program

\begin{theorem}
    Let $\mathcal{P}$ be a datalog program transformed from an OEN. For constant symbols $c_1$, $c_2$ in $\mathcal{P}$, $\mathcal{P} \models r(c_1,c_2)$ if and only if $c_2$ is reachable from $c_1$ in the OEN given an initial marking $m$.
    \label{theorem:appendix_equivalence}
\end{theorem}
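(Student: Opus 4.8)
The plan is to link the least Herbrand model $M(\mathcal{P})$ of the transformed program to the iterates of the step consequence operator $\mathcal{T}_F$, and hence, via Proposition~\ref{proposition:fixpoint}, to the set of places reachable from the initial marking. Since $\mathcal{P}$ is a definite datalog program, $\mathcal{P}\models r(c_1,c_2)$ iff $r(c_1,c_2)\in M(\mathcal{P})$, which is the least fixpoint $\bigcup_{n\ge 0}T_{\mathcal{P}}{\uparrow^n}$ of the program's immediate consequence operator $T_{\mathcal{P}}$. By Definition~\ref{def:transformation} the target predicate $r$ (the predicate $r_2$) is exactly the transitive closure of $r_1$, and $r_1$ is the union of the $token$-edges (from the initial-marking symbol to each place or pre-set bundle it contains) and the $t$-edges (from a transition's pre-set symbol to its post-set). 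So the first step is to fix a dictionary: a concatenated constant symbol $p^*$ denotes the set of places it concatenates, $c_1$ denotes the initial marking $m$, and every $r_1$-edge out of a symbol $S$ lands either on a decomposed place of $m$ (when $S$ encodes $m$) or on a component of the post-set of a transition whose pre-set is the set denoted by $S$.

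For the ``if'' direction I would induct on the least $n$ with $c_2\in\mathcal{T}_F{\uparrow^n}$ built from $m=c_1$ (such $n$ exists by Proposition~\ref{proposition:fixpoint}). When $n$ witnesses $c_2\in m$, the transformation has written $token(m,c_2)$, hence $r_1(c_1,c_2)$ and $r(c_1,c_2)$. Otherwise $c_2$ is produced by firing an enabled transition $t$ at a reachable marking; its pre-set places are all reachable in fewer steps, so the induction hypothesis applied to the bundle symbol $p^*$ of ${}^{\bullet}t$ gives $r(c_1,p^*)$, the transformation supplies $t(p^*,c_2)$ (or $t(p^*,p_2^*)$ with $c_2$ a component of $p_2^*$), giving $r_1(p^*,c_2)$, and transitivity of $r$ yields $r(c_1,c_2)$. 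For the ``only if'' direction I would induct on the stage $T_{\mathcal{P}}{\uparrow^n}$ at which $r(c_1,c_2)$ first appears, unfolding $r$ into a chain $r_1(c_1,d_1),r_1(d_1,d_2),\dots,r_1(d_{k-1},c_2)$. The first edge must come from a $token$ fact, so $d_1$ denotes a subset of the places marked by $m$, and each later edge comes from a $t$ fact encoding a transition. Walking the chain in the OEN while maintaining the invariant ``the union of all symbols visited so far is reachable from $m$'', each $t$-edge corresponds to firing an enabled transition, so Lemma~\ref{lemma:step} adds its post-set to the reachable set; at the end $c_2$ is reachable. Combining both directions, and invoking Proposition~\ref{proposition:fixpoint} to identify ``$c_2$ reachable'' with ``$c_2\in\mathcal{T}_F{\uparrow^\alpha}$'', completes the argument.

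The hard part is that the transformation collapses a conjunctive enabling condition $\land_{p_i\in{}^{\bullet}t}m(p_i)=1$ into reachability of a single bundle symbol $p^*$, and likewise bundles the post-set. I must therefore argue carefully that (a) whenever all pre-set places of an enabled transition are individually reachable, the bundle symbol $p^*$ is itself $r$-reachable from $c_1$ — this relies on the transformation writing exactly the $token(\cdot,p^*)$ and $t(\cdot,p^*)$ facts, so bundles never appear ``from nowhere'' — and conversely (b) that every $r_1$-edge leaving a bundle genuinely corresponds to a transition enabled under some reachable marking. Getting this chain-to-firing-sequence correspondence right, including the degenerate first $token$ edge and the bookkeeping between bundles and single places, is where the real content lies; the lattice and fixpoint machinery of Propositions~\ref{proposition:lattice}--\ref{proposition:fixpoint} together with Lemma~\ref{lemma:step} then do the rest.
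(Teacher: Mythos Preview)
Your proposal is essentially the paper's own strategy: both directions are argued by relating the least Herbrand model of $\mathcal{P}$ (via the immediate consequence operator $T_{\mathcal{P}}$) to the least fixpoint of the step consequence operator $\mathcal{T}_F$ (Proposition~\ref{proposition:fixpoint}), unfolding $r$ into a chain of $r_1$-facts and translating that chain into a finite firing sequence (using Lemma~\ref{lemma:step}) and back. The paper's proof is terser---it simply asserts that the ground body atoms $B_1,\dots,B_k$ describe transitions and that ``there is a grounding of the recursive clauses such that $r(c_1,c_2)\gets B_1,\dots,B_k$''---whereas you make the two inductions explicit; but the skeleton is the same.

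One remark on the point you flag as ``the hard part'': your inductive step in the ``if'' direction says the induction hypothesis applied to the bundle symbol $p^*$ of ${}^{\bullet}t$ gives $r(c_1,p^*)$, but strictly the hypothesis only gives $r(c_1,p_i)$ for each individual $p_i\in{}^{\bullet}t$, and there is no datalog rule that assembles these into $r(c_1,p^*)$. Reachability of the bundle constant must instead come from a single $token(m,p^*)$ or $t'(\cdot,p^*)$ fact written by the transformation. The paper's proof glosses over exactly this point as well, so you are not missing anything the paper supplies; but when you flesh the argument out, the induction should track reachability of bundle symbols directly (i.e.\ show that whenever the transformation writes some $t(p_1^*,\cdot)$, the symbol $p_1^*$ is already $r$-reachable from $m$) rather than trying to recover $r(c_1,p^*)$ from the reachability of its constituent places.
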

\begin{proof}
    Prove the forward implication. Assume $\mathcal{P} \models r(c_1,c_2)$. Constant symbols $c_1$ and $c_2$ represent place nodes in the OEN. We know $r(c_1,c_2)$ is in the least model. Based on the definition of the Immediate Consequence Operator \cite{van_emden_semantics_1976}, $r(c_1,c_2) \gets \, B_1,..., B_i$ ($i \geq 0$) is a clause in $\mathcal{P}$ and $\{B_1,..., B_i \} \subseteq \mathcal{T_P}{\uparrow^\omega}$. From Definition \ref{def:transformation}, we know that $B_1,..., B_i$ are unit ground clauses describing the pre-sets and post-sets of transitions. There exists $n \geq 0$ such that $c_1 \in \mathcal{T}_F{\uparrow^n}$. Following the transitions $B_1,..., B_i$, we can apply the operator $\mathcal{T}_F$ to $\mathcal{T}_F{\uparrow^n}$. From Definition \ref{def:step_operator} and proposition \ref{proposition:fixpoint}, we know $c_2 \in \mathcal{T}_F{\uparrow^\alpha}$ so a set of marked places $S \in \Omega$ must contain $c_2$. Therefore, $c_2$ is reachable from $c_1$ given an initial marking $m$.  
    
    Now we prove the backward implication. Assume $c_2$ is reachable from $c_1$ in the OEN given an initial marking $m$. We know from Proposition \ref{proposition:fixpoint} that $c_1, c_2\in \mathcal{T}_F{\uparrow^\alpha}$. From Definition \ref{def:step_operator}, there exists a finite sequence of steps $m \xrightarrow[]{t_1} m_1 \xrightarrow[]{t_2} ... \xrightarrow[]{t_{k}} m'$ where $m(c_1) = m'(c_2) = 1$ and $k \geq 0$. From Definition \ref{def:transformation}, the transformed program $\mathcal{P}$ contains these transitions as unit ground clauses $\{B_1,..., B_k\} \subseteq \mathcal{B_P}$  and recursive clauses $r(X,Y) \gets r_1(X,Y)$ and $r(X,Y) \gets r_1(X,Z), r(Z,Y)$. There is a grounding of the recursive clauses such that $r(c_1,c_2) \gets B_1,..., B_k$. By applying the $\mathcal{T_P}$ operator, we know $\mathcal{P} \models r(c_1,c_2)$. 
\end{proof}
Theorem \ref{theorem:appendix_equivalence} says that a program transformed from an OEN is logically equivalent to the OEN based on the fixpoint semantics in Proposition \ref{proposition:fixpoint} for evaluating OEN reachability queries. In the next section, we explain how we leverage Boolean matrix operations to evaluate datalog transformed from OENs.

\section{Boolean matrix encoding}
\label{app:matrix_creation}
From a datalog program $\mathcal{P}$, we define mappings between constant symbols and binary codes. Consider a ground term $u$ that concatenates $n$ constant symbols $c_1, c_2, ..., c_n$. We can represent $u$ using a $n$-bit binary code. When $c_k$ is in $u$ the $k$-th bit of the binary code is 1, otherwise, it is 0. For a linear and immediately recursive clause $r$ that takes such terms as arguments, two Boolean matrices $\textbf{R}_1$ and $\textbf{R}_2$ are defined by mapping unit ground clauses used by $r$ to binary codes. The two matrices in logic program format have $\{v_1(1, b_1^1)$, $v_1(2, b_1^2), ...\}$ and $\{v_2(1, b_2^1)$, $v_2(2, b_2^2), ...\}$ as their rows where $v_1(i, b_1^i)$ and $v_2(i, b_2^i)$ are the $i$-th rows of the Boolean matrices $\textbf{R}_1$ and $\textbf{R}_2$. In $v_1$ and $v_2$ clauses, $b_1^i$  and $b_2^i$ are binary codes to represent constant symbols $u_1$ and $u_2$ as $r$'s first and second arguments. $\textbf{R}_1$ and $\textbf{R}_2$ are defined such that $v_1(i, b_1^i)$ and $v_2(i, b_2^i)$ hold only when $\mathcal{P} \models r(u_1, u_2)$. 
\begin{example}
\textit{We show Boolean matrices created from $\mathcal{P}_2$ in Example \ref{ex:transformation}. Individual metabolites are represented by constant symbols $c1$, $c2$ and $c3$. Sets of chemical substances are denoted by additional concatenated constant symbols, $c1\_c2$, $c3\_c4$. On the right-hand side, we create two Boolean matrices $\textbf{R}_1$ and $\textbf{R}_2$ from the reaction clauses in $\mathcal{P}_1$. Since the first reaction\_1 clause also contains the constant symbols of the other reaction\_1 clauses, to avoid redundancy, only one binary code is created for these clauses in $\textbf{R}_1$ and $\textbf{R}_2$. }\\
\begin{minipage}{0.57\textwidth}
\begin{gather}
\mathcal{P}_{2}:\left\{
\begin{aligned}
    ... &\\
    reaction\_1(c1\_c2, c3\_c4). &\\
    reaction\_1(c1\_c2, c3). &\\
    reaction\_1(c1\_c2, c4).&\\
    reaction\_2(c3\_c4, c5). &\\
    reaction\_3(c5, c4). &\\
    ... &\\
    reaction(X,Y) \gets reaction\_1(X,Y). &\\
    reaction(X,Y) \gets reaction\_2(X,Y). &\\
    reaction(X,Y) \gets reaction\_3(X,Y). &\\
    pathway(X,Y) \gets reaction(X,Y). &\\ 
    pathway(X,Y) \gets reaction(X,Z), &\\
                pathway(Z,Y). & \nonumber
\end{aligned}
\right\}
\label{ex:recursive_program}
\end{gather}
\end{minipage}
\begin{minipage}{0.4\textwidth}
\vspace{-5pt}
\begin{gather}
\textbf{R}_1 \text{ (matrix)}: 
\begin{bmatrix}
    0 \quad 0 \quad 0 \quad 1 \quad 1 \\
    0 \quad 1 \quad 1 \quad 0 \quad 0 \\
    1 \quad 0 \quad 0 \quad 0 \quad 0 \nonumber
\end{bmatrix}
\end{gather}
\vspace{-10pt}
\begin{gather}
\textbf{R}_2 \text{ (matrix)}: 
\begin{bmatrix}
    0 \quad 1 \quad 1 \quad 0 \quad 0 \\
    1 \quad 0 \quad 0 \quad 0 \quad 0 \\
    0 \quad 1 \quad 0 \quad 0 \quad 0 \nonumber
\end{bmatrix}
\end{gather}
\vspace{-10pt}
\begin{gather}
\textbf{R}_1\text{ (program)}: \left\{
\begin{aligned}
    &v_1(1, 00011). \\ 
    &v_1(2, 01100). \\
    &v_1(3, 10000). \nonumber
\end{aligned}
\right\}
\end{gather}
\vspace{-10pt}
\begin{gather}
\textbf{R}_2\text{ (program)}: \left\{
\begin{aligned}
    &v_2(1, 01100). \\
    &v_2(2, 10000). \\
    &v_2(3, 01000). \nonumber
\end{aligned}
\right\}
\end{gather}
\end{minipage}
\bigskip

\textit{Constant symbols are represented by rows in $\textbf{R}_1$ and $\textbf{R}_2$. Constant symbols $c1\_c2$, $c3\_c4$ and $c5$ that appear on the left-hand side of reactions are denoted by binary bits in $\textbf{R}_1$. Constant symbols $c3$, $c4$, $c5$ and $c3\_c4$ that appear on the right-hand side of reactions are denoted by binary bits in $\textbf{R}_2$. Each row in the Boolean matrices is written as a unit ground clause. For example, since $\mathcal{P}_2 \models reaction(c1\_c2, c3\_c4)$, the first row of $\textbf{R}_1$ is written a binary code ``00011'' to represent $c1\_c2$ and the first row of $\textbf{R}_2$ is a binary code ``01100'' to represent $c3\_c4$. Each ``1'' binary bit indicates the presence of a constant symbol. }

\label{ex:bmlp_compilation}
\end{example}

Each transformed program contains clauses describing an initial marking. These clauses are represented by a Boolean vector $\textbf{v}$ which is a single-row Boolean matrix where the binary bits correspond to the constant symbols.

\begin{example}
\textit{We show the mapping of a single row matrix from the token definition from $\mathcal{P}_2$ in Example \ref{ex:transformation}. Individual metabolites are represented by constant symbols $c1$ and $c2$. The set of chemical substances is denoted by an additional concatenated constant symbol, $c1\_c2$.}\\
\begin{minipage}{0.57\textwidth}
\begin{gather}
\mathcal{P}_{1}:\left\{
\begin{aligned}
    token(m1,c1\_c2). &\\
    token(m1,c1).&\\
    token(m1,c2).&\\
    ...&\\
    reaction(X,Y) \gets token(X,Y).&\\
    ... &\nonumber
\end{aligned}
\right\}
\end{gather}
\end{minipage}
\begin{minipage}{0.4\textwidth}
\vspace{-5pt}
\begin{gather}
\textbf{v} \text{ (vector)}: 
\begin{bmatrix}
    0 \quad 0 \quad 0 \quad 1 \quad 1 \nonumber
\end{bmatrix}
\end{gather}
\vspace{-10pt}
\begin{gather}
\textbf{v} \text{ (program)}: \left\{
\begin{aligned}
    v(1, 00011). \nonumber
\end{aligned}
\right\}
\end{gather}
\end{minipage}
\bigskip

\textit{On the right-hand side, the constant symbol $c1\_c2$ is represented by the vector $\textbf{v}$. Since the first token clause contains constant symbols in the other token clauses, $\textbf{v}$ only includes the right-hand side of the first token clause to avoid redundancy. In program form, $\textbf{v}$ is written as a unit ground clause containing a binary code ``00011'' to represent $c1\_c2$. }
\end{example}

\section{Iterative extension (BMLP-IE)}
\label{app:bmlp_ie}
\begin{algorithm}[t]
    \caption{Iterative extension (BMLP-IE)}
    \label{alg:algorithm1}
    \textbf{Input}: A $1 \times n$ vector $\textbf{v}$, a $1 \times k$ vector $\textbf{t}$ that represents transitions, two $k \times n$ Boolean matrices $\textbf{R}_{1}$, $\textbf{R}_{2}$ that encode unit ground clauses.\\
    \textbf{Output}: Transitive closure $\textbf{v}^*.$
    \begin{algorithmic}[1] %[1] enables line numbers
        \STATE Let $\textbf{v}^*$ = $\textbf{v}$, $1 \times k$ vector $\textbf{v}'$ = \textbf{0}.
        \WHILE{$True$}
            \FOR{$1 \leq i \leq k$} 
                \STATE $(\textbf{v}')_i = (\textbf{t})_i$ if ($\textbf{v}$ AND $(\textbf{R}_{1})_{i,*}$) == $(\textbf{R}_{1})_{i,*}$. 
            \ENDFOR
        \STATE $\textbf{v}^*$ = ($\textbf{v}'$ $\textbf{R}_{2}$) + \textbf{v}.
        \STATE $\textbf{v} = \textbf{v}^*$ if $\textbf{v}^* \ne \textbf{v}$ else break.
        \ENDWHILE
    \end{algorithmic}
\end{algorithm}
The iterative extension algorithm (BMLP-IE) operates on two matrices to iteratively expand terms reachable in a partially grounded query $r(u, Y)$. Given that $\mathcal{P}$ contains $n$ constant symbols, we represent the constant symbols in $u$ as a $1 \times n$ row vector or a $n$-bit binary code $\mathbf{v}$. We write $k \times n$ Boolean matrices $\textbf{R}_1$ and $\textbf{R}_2$ for $k$ unit ground clauses used by $r$. $\textbf{R}_1$ and $\textbf{R}_2$ describe which constant symbols appear first and second in the unit ground clauses. In addition, a $1 \times k$ Boolean vector $\textbf{t}$ is used to express additional control on the viability of transitions in the OEN. Transition can be turned ``on'' or ``off'' via this vector. ``On'' transitions have the binary value 1 and ``off'' transitions have the value 0. This allows us to temporarily modify the network connectivity without re-creating $\textbf{R}_1$ and $\textbf{R}_2$.

We implement BMLP-IE (Algorithm \ref{alg:algorithm1}) in SWI-Prolog (version 9.2) to compute the transitive closure of $\textbf{v}$. In SWI-Prolog, each binary code can be written as an integer number, e.g. ``00011'' is the integer ``3''. We use the bitwise AND to locate rows in $\textbf{R}_1$ that contain all the 1 elements in $\textbf{v}$. This computes indices of unit ground clauses that can extend the set of constant symbols in $\textbf{v}$. These constants are added to $\textbf{v}$ via Boolean matrix multiplication and addition. We iterate this process until we find the elements of $\textbf{v}$ no longer change. 
\begin{theorem}
    Given a $1 \times n$ vector $\textbf{v}$, a $1 \times k$ vector $\textbf{t}$, two $k \times n$ Boolean matrices $\textbf{R}_{1}$ and $\textbf{R}_{2}$, Algorithm \ref{alg:algorithm1} has a time complexity $O(k n^2)$ for computing the transitive closure $\textbf{v}^*$.
    \label{theorem:2}
\end{theorem}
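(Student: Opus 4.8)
The plan is to separate the analysis into two independent pieces — the cost of a single pass through the \textbf{while} loop, and a bound on how many passes occur — and then multiply them.

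First I would account for the work in one iteration of the \textbf{while} loop. The inner \textbf{for} loop runs $k$ times; each pass computes the bitwise conjunction of the $n$-bit vector $\textbf{v}$ with the $i$-th row of $\textbf{R}_1$ and compares the $n$-bit result with that row, which is $O(n)$, so the \textbf{for} loop costs $O(kn)$. The product $\textbf{v}' \textbf{R}_2$ multiplies a $1 \times k$ Boolean vector by a $k \times n$ Boolean matrix; forming each of the $n$ output bits as a disjunction of $k$ terms costs $O(k)$, hence $O(kn)$ overall. The Boolean addition with $\textbf{v}$ and the inequality test $\textbf{v}^* \ne \textbf{v}$ are each $O(n)$. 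So a single pass through the loop body is $O(kn)$.

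Next I would bound the number of passes. The key observation is that the values successively assigned to $\textbf{v}$ form a monotonically non-decreasing chain when viewed as sets of indices carrying a $1$-bit: since $\textbf{v}^* = (\textbf{v}' \textbf{R}_2) + \textbf{v}$ and Boolean addition is bitwise disjunction, every $1$-bit of $\textbf{v}$ remains a $1$-bit of $\textbf{v}^*$. The loop exits exactly when a pass leaves $\textbf{v}$ unchanged, so every non-terminal pass must convert at least one $0$-bit into a $1$-bit. Because $\textbf{v}$ has only $n$ bits, there are at most $n$ non-terminal passes plus one final pass that detects the fixpoint, so the \textbf{while} loop runs $O(n)$ times; this simultaneously proves termination. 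Combining the two estimates gives $O(n) \cdot O(kn) = O(k n^2)$.

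The main obstacle is the iteration-count step: one must argue carefully that $\textbf{v}$ never loses a $1$-bit — which follows solely from $+$ being disjunction — so that a strict gain of at least one bit occurs on every pass that does not trigger the break, capping the number of passes at $n+1$. The rest is routine bookkeeping of bit-level operations, and the implementation detail that the $n$-bit codes are stored as integers does not affect the asymptotics under the usual convention that an operation on an $n$-bit value costs $O(n)$.
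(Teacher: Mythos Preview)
Your proposal is correct and follows essentially the same approach as the paper's proof: bound the cost of one \textbf{while} iteration by $O(kn)$ via the vector--matrix product and the \textbf{for} loop, then bound the number of iterations by $O(n)$ because each non-terminal pass must add at least one new $1$-bit to $\textbf{v}$. Your version is in fact more careful than the paper's, since you make the monotonicity of $\textbf{v}$ (and hence termination) explicit via the disjunctive nature of Boolean $+$, whereas the paper simply asserts that at least one new element is added per iteration.
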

\begin{proof}
    In Algorithm \ref{alg:algorithm1}, the complexity of an iteration in the ``while'' loop is $O(k \times n)$ bitwise operations due to the multiplication between a vector and a $k \times n$ matrix. Each ``for'' loop pass also performs $O(k \times n)$ bitwise comparisons. Until we find the transitive closure, at least one unit ground clause needs to be added in each ``while'' loop iteration. Therefore, there are at most $n$ iterations which require $O(k \times n^2)$ bitwise operations. $\square$
\end{proof}

Theorem \ref{theorem:2} shows that BMLP-IE has a polynomial runtime cubic in the size of the Boolean matrices. Evaluating the program $\mathcal{P}$ using SLD resolutions has the same time complexity, and the runtime of BMLP-IE primarily benefits from bit-wise operations. For SLD resolutions, the size of the subset of the Herbrand base provable from a query $r(u, Y)$ is bounded by $O(n)$. Each member of this subset can be found via a depth-first search. Since there are $k$ unit ground clauses and each variable can be grounded by at most $n$ constants, this query requires at most $n$ search in a tree of $O(k \times n)$ nodes. For OEN reachability queries, it is important to note that the size of Boolean matrices grows linearly with the number of nodes because they are both bounded by the number of transitions. 

\section{Active learning sample complexity}
\label{app:sample_complexity}
Theorem \ref{theorem:active_learning} shows the sample complexity gain of active learning over learning from randomly sampled instances. 
\begin{theorem}[Active learning sample complexity bound]
    For some $\phi \in [0, \frac{1}{2}]$ and small hypothesis error $\epsilon > 0$, if an active version space learner can select instances to label from an instance space $\mathcal{X}$ with minimal reduction ratios greater than or equal to $\phi$, the sample complexity $s_{active}$ of the active learner is 
    \begin{flalign}
        s_{active} \leq \frac{\epsilon}{\epsilon + \phi} s_{passive} + c
    \end{flalign}
    where c is a constant and $s_{passive}$ is the sample complexity of learning from randomly labelled instances.
\end{theorem}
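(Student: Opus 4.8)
The plan is to compare the rate at which the version space shrinks under active selection versus random sampling, and then invert the shrinkage relation to get a bound on sample complexities in terms of the target error $\epsilon$. First I would set up the standard version-space argument: for a passive (random) learner, the probability that a randomly drawn instance fails to be consistent with a given bad hypothesis $h$ of error at least $\epsilon$ is at most $1-\epsilon$, so after $s_{passive}$ i.i.d.\ examples the expected number of surviving $\epsilon$-bad hypotheses decays like $(1-\epsilon)^{s_{passive}}|H|$, and $|V_{s_{passive}}|$ is controlled accordingly. For the active learner, by hypothesis every selected instance $x_{s+1}$ has minimal reduction ratio at least $\phi$, so $\frac{|V_{s+1}|}{|V_s|}\le 1-\phi$ in the worst case over the realised label; hence $|V_{s_{active}}| \le (1-\phi)^{s_{active}}|H|$.

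The key step is to relate ``number of samples to reach a target version-space size (or target error)'' for the two regimes. If both learners must drive the version space down to the same threshold size $m$ (e.g.\ $m=1$, or the size at which only $\epsilon$-good hypotheses remain), then taking logarithms, $s_{active}\log\frac{1}{1-\phi} \ge \log\frac{|H|}{m}$ and similarly $s_{passive}\log\frac{1}{1-\epsilon}\ge \log\frac{|H|}{m}$ with matching lower/upper directions, so $s_{active} \lesssim \frac{\log(1/(1-\epsilon))}{\log(1/(1-\phi))}\, s_{passive}$. The remaining work is to show this ratio is at most $\frac{\epsilon}{\epsilon+\phi}$ up to an additive constant. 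Using the elementary bounds $\log\frac{1}{1-\epsilon}\le \frac{\epsilon}{1-\epsilon}$ and $\log\frac{1}{1-\phi}\ge \phi$, and the fact that $\epsilon>0$ is small while $\phi\in[0,\tfrac12]$, one gets $\frac{\log(1/(1-\epsilon))}{\log(1/(1-\phi))}\le \frac{\epsilon}{\phi(1-\epsilon)} \le \frac{\epsilon}{\epsilon+\phi} + O(\epsilon^2/\phi)$, and the lower-order terms plus the rounding-to-integer slack from the $\log(|H|/m)$ comparison are absorbed into the constant $c$.

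The main obstacle I expect is making the ``worst-case label'' step rigorous and uniform: the minimal reduction ratio $\phi$ only guarantees that the \emph{minority} side has size at least $\phi|V_s|$, so the realised reduction could be as small as a factor $1-\phi$ only if the adversarial label lands on the majority side — I need this to hold for \emph{every} step simultaneously, which is fine since it is a deterministic per-step inequality, but care is needed because the instance space $\mathcal{X}$ and the attainable ratios may change as $V_s$ shrinks (the hypothesis guarantees ``$\ge\phi$ throughout''). A secondary subtlety is aligning the two stopping criteria: passive learning is probabilistic (PAC-style, succeeding with high probability), so the comparison should be read as a bound on expected sample complexity or a high-probability bound, and the constant $c$ should be chosen to cover the confidence term $\log(1/\delta)$ and the integer ceilings. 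I would state these conventions explicitly at the start of the proof and then carry out the two-line logarithmic comparison above.
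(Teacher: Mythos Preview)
Your logarithmic comparison gives the wrong ratio. From $|V_{s_{active}}|\le (1-\phi)^{s_{active}}|H|$ for the active learner and $(1-\epsilon)^{s_{passive}}|H|$ for the passive learner you obtain
\[
\frac{s_{active}}{s_{passive}}\;\lesssim\;\frac{\log\frac{1}{1-\epsilon}}{\log\frac{1}{1-\phi}}\;\approx\;\frac{\epsilon}{\phi},
\]
and since $\frac{\epsilon}{\phi}>\frac{\epsilon}{\epsilon+\phi}$ for every $\epsilon>0$, your bound is strictly weaker than the target. Your attempt to close the gap by writing $\frac{\epsilon}{\phi(1-\epsilon)}\le\frac{\epsilon}{\epsilon+\phi}+O(\epsilon^2/\phi)$ and absorbing the remainder into $c$ does not work: the remainder multiplies $s_{passive}$, so the ``constant'' you produce is of order $\frac{\epsilon^2}{\phi(\epsilon+\phi)}\,s_{passive}$, which scales with $s_{passive}$ and can be as large as the main term. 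That is not the additive constant the theorem asserts.

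The idea you are missing, and which the paper uses, is that the active learner should be credited with \emph{both} effects per step: the version-space shrinkage factor $(1-\phi)$ \emph{and} the usual $(1-\epsilon)$ consistency factor for an $\epsilon$-bad hypothesis. The paper bounds the survival probability of a bad hypothesis after $s$ active queries by $(1-\epsilon)^s(1-\phi)^s|H|$, sets this equal to $\delta$, and uses $-\ln(1-u)>u$ to get $s_{active}<\frac{1}{\epsilon+\phi}\bigl(\ln|H|+\ln\frac{1}{\delta}\bigr)$. Comparing this directly to the Blumer bound $s_{passive}>\frac{1}{\epsilon}\bigl(\ln|H|+\ln\frac{1}{\delta}\bigr)$ yields $(\epsilon+\phi)\,s_{active}<\epsilon\,s_{passive}$ and hence the stated inequality with $c$ essentially zero. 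In short, you treated the two shrinkage mechanisms as alternatives (one for each learner) rather than compounding them for the active learner; that is what produces the $\epsilon+\phi$ in the denominator.
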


\begin{proof}
    We consider an arbitrary distribution of the minimal reduction ratio over $\mathcal{X}$ bounded by $[0,\frac{1}{2}]$. Suppose a version space learner can select instances with a minimal reduction ratio greater than $\phi$. Since $\phi$ represents the ratio of the minority partition, the number of candidate hypotheses remaining in the version space $V_s$ after $s$ labelled instances is bounded by the size of the majority partitions $(1 - \phi)^s |H|$ where $H$ is the hypothesis space. The probability that any hypothesis in $V_s$ with an error larger than $\epsilon$ is consistent with $s$ training examples is: 
    \begin{flalign*}
        p_{error} &= (1 - \epsilon)^s |V_s| \\ \nonumber
        &\leq (1 - \epsilon)^s (1 - \phi)^s |H|
    \end{flalign*}
    $p_{error}$ should be small so suppose for small $\delta > 0$,
    \begin{flalign}
        (1 - \epsilon)^s (1 - \phi)^s |H| = \delta
        \label{eq:p_error}
    \end{flalign}
    Equation \ref{eq:p_error} corresponds to the probability of the extreme case when only the minority of the remaining hypotheses can be eliminated by each labelled instance. By rearranging Equation (\ref{eq:p_error}), the number of training examples required by this learner in this extreme case is:
    \begin{flalign*}
        s = \frac{1}{- ln(1 - \epsilon) - ln(1 - \phi)} (ln(|H|) + ln(\frac{1}{\delta}))
    \end{flalign*}
    For an arbitrary active learner whose membership queried instances also have a minimal reduction ratio greater than or equal to $\phi$, the number of actively selected instances $s_{active}$ is upper bounded by $s$ for reducing the size of version space to $|V_s|$. Since $(1 - u) < e^{-u}$ for small $u > 0$, it is true that $- ln(1 - u) > u$. So the following holds:
    \begin{flalign*}
        s_{active} < \frac{1}{\epsilon + \phi} (ln(|H|) + ln(\frac{1}{\delta}))
    \end{flalign*}    
    According to the Blumer bound \cite{blumer_occams_1987}, for a version space learner that randomly samples examples to produce a hypothesis of error at most $\epsilon$ with a probability less than or equal to $\delta$, the sample complexity is $s_{passive} > \frac{1}{\epsilon} (ln(|H|) + ln(\frac{1}{\delta}))$. Following this, it must be true that $ \epsilon \, s_{passive} > (\epsilon + \phi) \, s_{active}$. So we can derive the Inequality \ref{eq:sample_complexity_bound} in the theorem:
    \begin{flalign*}
        s_{active} \leq \frac{\epsilon}{\epsilon + \phi} s_{passive} + c
    \end{flalign*}
    where c is a constant. $ \square $
\end{proof}

\section{Implementation details}
\label{app:implementation}
\subsection{Encoding a metabolic network} 

Our encoding of a metabolic network is based on the Robot Scientist \cite{King04:RobotScientist,ase_progol}. The Robot Scientist is implemented using the ILP system Progol \cite{progol} and finds the producible set of metabolites by traversing a metabolic network. It uses a metabolic network as a deductive database and answers a query of user-defined gene deletion and experimental conditions based on SLD resolutions. Its deductive database is written in Prolog syntax and uses Prolog list operations to manipulate constant symbols representing metabolic substances. Metabolic reactions are enumerated based on the set of metabolites currently producible. The products of a metabolic reaction extend the list of producible metabolites. Since the number of metabolic reactions is finite, the closure of members in this list needs to contain essential metabolites for the growth of an organism. In an auxotrophic mutant experiment outcome prediction, a gene function is removed. Some key metabolic reaction pathways might no longer be viable due to the removal, leading to insufficient metabolites for growth. In this case, the predicted experiment's outcome is a lack of growth and a phenotypic effect. 

In the Robot Scientist, the metabolic network is defined by $enzyme/6$ clauses. Each $enzyme/6$ clause contains an artificial identifier for an enzyme, gene locus identifiers, reaction identifiers, a reaction reversibility indicator and reactant and product identifiers. A gene locus corresponds to a physical site within a genome and is uniquely associated with a gene identifier. An artificial enzyme identifier is created for each unique gene locus identifier. An auxotrophic mutant experiment is described as a query of $phenotypic\_effect/2$ whose outcome is a binary prediction. A $phenotypic\_effect/2$ query has a gene identifier and a list of reagents which are optional chemical substances that provide additional nutrients to facilitate metabolic reactions. These reagents are added to a list of pre-defined base growth medium compounds. Gene deletions are realised via $codes/2$ clauses where an enzyme corresponding to the deleted gene would be removed from the set of metabolic reactions to be enumerated. 
\begin{example}
\textit{In the following program $\mathcal{P}_3$, the $codes/2$ clause says that the gene locus ``b2379'' is responsible for the enzyme ``enz201''. The $enzyme/6$ clause defines the reversible reaction ``alata\_l'' (L-alanine transaminase). This reaction involves reactants ``akg\_c'' (Oxoglutarate), ``ala\_\_L\_c'' (L-alanine) and products ``glu\_\_L\_c'' (L-glutamate), ``pyr\_c'' (Pyruvate). An integer value in \{1,2\} indicates the reaction reversibility and the integer ``2'' says that the reaction is reversible.}
\begin{gather}
\mathcal{P}_3:\left\{
\begin{aligned} 
    codes(b2379,enz201). &\\
    enzyme(enz201,[b2379],[alata\_l],2,[[akg\_c,ala\_\_L\_c]],&\\
    [[glu\_\_L\_c,pyr\_c]]). &\nonumber
\end{aligned}
\right\}
\end{gather}
\textit{The following $phenotypic\_effect/2$ query describes an auxotrophic mutant experiment where the gene ``b2379'' is removed and an additional metabolite ``glyc\_e'' (Glycerol) is introduced in the growth medium. }
\begin{flalign*}
?- phenotypic\_effect(b2379,[glyc\_e]).
\end{flalign*}
\textit{iML1515 separates metabolites into different compartments in the cell. The suffices ``e'' (extracellular space) and ``c'' (cytosol) represent two compartments, which divide the metabolic network into sub-networks.}
\end{example}

We included this encoding in $BMLP_{active}$ for hypothesis generation and human interpretation. The $codes/2$ and $enzyme/6$ clauses enable us to describe the gene-reaction-metabolite mappings and standard identifiers in iML1515\footnote{The iML1515 model is described in the text-based JavaScript Object Notation (JSON). It uses standard identifiers from the Biochemical Genetic and Genomic (BiGG) \cite{king_bigg_2016} knowledge base of large-scale metabolic reconstructions.} into a logic program containing $codes/2$ and $enzyme/6$ clauses. iML1515 uses standard nomenclature so the logic program encoding can be applied to other genome-scale metabolic networks. The logic program encoding can also be linked to publications and external databases to obtain further details on genes, proteins, reactions and metabolites. 

\subsection{BMLP-IE predictions from GEM}

\begin{algorithm}[h]
    \caption{BMLP phenotype prediction}
    \label{alg:GEM_BMLP_IE}
    \textbf{Input}: Background knowledge $BK$, an experiment query $q$.\\
    \textbf{Output}: A binary phenotypic effect classification: 1 represents a phenotypic effect and 0 denotes no phenotypic effect.
    \begin{algorithmic}[1]
        \STATE From $reaction/2$ clauses in $BK$, create Boolean matrices $\textbf{R}_1$, $\textbf{R}_2$.
        \STATE From biomass metabolites in $BK$, create vector $\textbf{v}_{biomass}$.
        \STATE From growth medium metabolites in $q$, create vector $\textbf{v}$.
        \STATE From knockout genes in $q$, create vector $\textbf{t}$ for viable reactions.
        \STATE return 1 if (BMLP-IE($\textbf{v}$, $\textbf{t}$, $\textbf{R}_1$, $\textbf{R}_2$) AND $\textbf{v}_{biomass}) \ne \textbf{v}_{biomass}$ else return 0.
    \end{algorithmic}
\end{algorithm}

To predict phenotypes from iML1515 using BMLP-IE, $reaction/2$ clauses are created from $enzyme/6$ clauses. Each clause contains reactants and product identifiers. Each reversible reaction is defined as two clauses where reactants and products are swapped. $BMLP_{active}$ allows multiple gene deletions, e.g. $phenotypic\_effect([b2290,b2379]$, $[glyc\_e])$. This enables the learning of digenic interactions such as gene-isoenzyme functions. Since multiple genes can be responsible for the same enzyme, single knockouts may not have phenotypic effects. A hypothesis of a gene-isoenzyme function contains one $codes/2$ clause and one $enzyme/6$ clause. The $enzyme/6$ in the hypothesis would be a replicate of an existing clause in the background knowledge with a new enzyme identifier.  

Algorithm \ref{alg:GEM_BMLP_IE} shows the application of BMLP-IE (Algorithm \ref{alg:algorithm1}) for finding the transitive closure of producible metabolites. The two Boolean matrices inputs to the BMLP-IE algorithm represent reactants and products participating in the reactions. The growth medium condition is encoded as the input vector to the BMLP-IE algorithm. This vector would be extended to represent metabolites that are producible by the organism in an auxotrophic mutant experiment. Specifying knockout genes excludes a subset of reactions. A vector computed during an intermediate step in the BMLP-IE algorithm denotes viable reactions given the set of knockout genes.  A viable reaction is indicated by a binary bit 1 in this vector. Blocked reactions are excluded based on $codes/2$ and $enzyme/6$ definitions. In BMLP-IE, Boolean operations compute the transition between metabolic states and return the closure set of producible metabolites. A binary phenotypic classification is made according to the inclusion of essential metabolites in the closure set, indicating cell growth comparable to an unedited wild-type MG1655 \textit{E. coli} strain. A binary classification of ``1'' corresponds to a phenotypic effect and ``0'' is no phenotypic effect. The essential metabolites come from the biomass function of the wild-type in iML1515 \cite{iML1515}\footnote{We use the wild-type biomass function specified by iML1515 from http://bigg.ucsd.edu/ \cite{king_bigg_2016}. We use the qualitative version of this biomass function, meaning that we only care about the metabolites in the biomass, not their reaction coefficients.}. The biomass function describes the necessary metabolite composition for the growth of the organism. 

\section{Experimental details}
\label{app:experiments}

Since a logic program representing a GEM describes many biological entities and intricate relations between them, abduction of gene function hypotheses and in-vivo validations requires exploring a large hypothesis space and instance space. Both are challenging computationally and empirically. $BMLP_{active}$ addresses these by evaluating a large logic program with Boolean matrices and actively selecting experiment instances to perform. We demonstrated that the $BMLP_{active}$ framework can efficiently perform phenotypic predictions and actively suggest informative experiment instances that can reduce experiment costs. 

\subsection{Experimental cost}

Here, we defined the experimental cost function as the cost of the reagents, which are optional nutrient substances used in selected experiments. We also assumed that when added to growth media, reagents are used in the same volume. In addition, we did not explicitly account for plastic consumables, equipment deprecation or specify the time to carry out each experiment, which we consider fixed and are affected mainly by the number of experiments. We also did not account for costs incurred due to different experiment types, such as different measurement timescales that prevent experiments from being performed concurrently. To maintain consistent scaling for comparisons, the total experimental reagent cost was the amount of unit cost required by selected experiments. We considered the unit cost of the reagents as the cost of the cheapest reagent.  We provided the costs of reagents in Table \ref{table:optional_nutrient_cost1} and \ref{table:optional_nutrient_cost2}. We calculated the total experimental reagent cost as the sum of the reagent costs in selected experiments divided by the unit cost.\\

\subsection{Experiment 1: BMLP-IE predictions}
\label{app:experiment1}

\noindent \textit{Materials.} We used $BMLP_{active}$ to predict a phenotype in each auxotrophic single-knockout mutant experiment. Every experiment removes one gene from the biological system and uses a growth medium. The GEM iML1515 has been validated against 16 carbon sources as optional nutrients in \cite{iML1515}. We focused on the same experimental conditions for base SWI-Prolog and $BMLP_{active}$. Experimental phenotypic data on the same 16 carbon sources and FBA phenotypic predictions of iML1515 are already processed as binary classes in \cite{iML1515}. In \cite{iML1515}, measured and predicted mutant growth rates at a proportion of the wild-type's growth rate greater than a fixed threshold are mapped to the no phenotypic effect class. \\

\noindent \textit{Methods.} We compared the runtime of $BMLP_{active}$ and the Robot Scientist \cite{King04:RobotScientist,ase_progol}'s encoding of the metabolic network (Appendix \ref{app:implementation}). We first compared the mean and standard deviation in runtime of our new logical matrix encoding against the SWI-Prolog encoding. In addition, we explored multi-threading with both encodings. We used the multi-threading library offered by SWI-Prolog. Owing to the overhead of managing threads, multi-threading Boolean operations with SWI-Prolog have diminishing returns. In BMLP-IE, conditions are encoded as matrices, enabling us to predict in batches. For BMLP-IE, we used the concurrent threading library in SWI-Prolog at the batch level. The runtime experiments are repeated 10 times.

\subsection{Experiment 2: active learning sample complexity and experiment cost}
\label{app:experiment2}

\noindent \textit{Materials.}  Up to 3 of 16 common carbon sources were used as additional nutrients to the base growth medium. Each of these 16 carbon sources is an optional nutrient in the experimental data for validating \cite{iML1515}. This set contains preferred carbon sources \cite{martinez-gomez_new_2012,zampieri_regulatory_2019} for \textit{E. coli} such as glucose and glycerol. We used iML1515 to predict with up to 3 optional carbon sources, generating $\sum_{0}^{i=3}\binom{16}{i} = 697$ synthetic data. This allows us to go beyond the 16 experimental data points for each gene knockout in \cite{iML1515}. We observed a single-knockout phenotypic effect from 213 genes. We treated these synthetic data as ground truths, where 213 genes have only positive phenotype examples. We only focused on a subset of 3 genes because they have both positive and negative examples\footnote{The distribution of training examples for this experiment is described in Table \ref{table:example_distribution} of Appendix \ref{app:appendix_distribution}.}. Then, we removed gene functions associated with these genes from the background knowledge. Thus, the resulting incomplete model has a smaller set of $codes/2$ clauses. \\

\noindent \textit{Methods.} We constrained the selection of experiments by an experimental reagent cost budget. After experiment selection, both methods returned the hypothesis with the single highest compression, otherwise, a hypothesis was selected from those sharing the highest compression randomly. We focused on one removed gene function at a time to recover it. Each hypothesis, except the empty one, is associated with a gene locus identifier. We concentrate on 1295 candidate hypotheses that represent the non-maintenance reactions in iML1515 (including an empty hypothesis). The instance space had experiment candidates with optional growth nutrients up to 3 carbon sources, which was a total of 697 synthetic data points. 

\subsection{Experiment 3: learning digenic functions}
\label{app:experiment3}

\noindent \textit{Materials.} We focused on re-discovering a target isoenzyme function associated with the gene $tyrB$ in the Tryptophan biosynthesis pathways. We aim to demonstrate gene-isoenzyme function learning by recovering this gene function. Associated functions of $tyrB$ and $aspC$ can be classified as isoenzyme functions since they are responsible for producing an important amino acid L-phenylalanine. In our logical representation of iML1515, we use reaction and $codes/2$ unit ground literals to describe an isoenzyme. We selected 33 genes associated with this essential pathway. In this focus gene set, 27 genes are associated with a single function and 6 genes are related to two functions. We consider medium conditions with 7 optional nutrients, including aromatic amino acids for recovering the deleted $tyrB$ isoenzyme function. Aromatic amino acids are fundamental for protein synthesis but also serve as precursors for vital secondary metabolites and high-value chemicals \cite{choi_violacein_2015}. \\

\noindent \textit{Methods.} We removed tyrB's gene function and metabolic reaction associated with this isoenzyme from $BMLP_{active}$. The experiment instance space contained double gene-knockout and single gene-knockout experiments of 33 genes. There were $\binom{33}{2}$ gene pair combinations for double-gene knockout experiments and 33 single-gene knockout experiments. We considered growth medium conditions with the 7 optional nutrients. Therefore, the experiment instance space was $(\binom{33}{2} + 33) \times 7 = 3927$ and 3696 experiment instance labels were synthetically generated from double-knockout experiments using the full iML1515 background knowledge. The rest 231 single-knockout empirical data were from validation data in \cite{iML1515}. 

In the metabolic network, an isoenzyme function hypothesis contains a node represented by a new enzyme identifier (a $enzyme/6$ clause) and a new link between a gene locus identifier and this new enzyme identifier (a $codes/2$ clause). The hypothesis space contained associations of 33 genes (from the focused library) with 33 $enzymes/6$ clauses. There were $(27 \times 32 + 6 \times 31) = 1050$ potential hypotheses for 27 single-function genes and 6 double-function genes. We also included the tyrB function hypothesis and an empty hypothesis, which gave 1052 hypotheses in total. We increased linearly and exponentially the number of experiments $N$ that could be selected according to \{1, 2, 4, 5, 8, 10, 15, 16, 20, 25, 30, 32, 64, 128, 256\}. $BMLP_{active}$ selected $N$ experiments from this instance space to actively learn from the hypothesis space. The random selection strategy randomly sampled $N$ instances from the instance space. Again, both methods output the hypothesis with the highest compression according to the sampled examples. A hypothesis was randomly selected when multiple competing hypotheses shared the highest compression. We recovered the models by adding the final hypotheses. We recorded the number of experiments selected and evaluated if the recovered models contained the target isoenzyme function from the iML1515 background knowledge. Each experiment selection method was repeated 10 times, and we computed the frequency of successful isoenzyme recovery across these 10 repeats. \\

\newpage
\section{Reagent costs of optional nutrient substances} 
\label{app:appendix_cost}
\begin{table}[h]
    \centering
	\begin{tabular}{c|c}
	      Carbon Source & Cost per gram (\pounds/g) \\
		\hline
		D-Mannitol & 0.534\\
            \hline
		Xylose & 0.51\\
            \hline
		Galactose & 0.0936\\ 
            \hline
            Pyruvate & 0.2776\\ 
            \hline
            Ribose & 0.993\\ 
            \hline
            Oxoglutarate & 0.437\\ 
            \hline
            Acetate & 0.0748\\ 
            \hline
            Gluconate & 0.0359\\ 
            \hline
            Glycerol & 0.07\\ 
            \hline
            Maltose & 0.244\\ 
            \hline
            Lactate & 0.0546\\ 
            \hline
            Sorbitol & 0.196\\ 
            \hline
            Succinate & 0.0825\\ 
            \hline
            N-acetyl-glucosamine & 2.93\\ 
            \hline
            L-Alanine & 0.81\\ 
            \hline
            Glucose & 0.0476\\ 
            \hline
	\end{tabular}
\caption{The costs of 16 optional carbon sources considered in Experiment 2.}
\label{table:optional_nutrient_cost1}
\end{table}

\begin{table}[h]
    \centering
	\begin{tabular}{c|c}
	      Carbon Source & Cost per gram (\pounds/g) \\
		\hline
		Indole & 0.0428\\ 
            \hline
		Phenylalanine & 0.0346\\
            \hline
		Tryptophan & 0.0346\\ 
            \hline
            Tyrosine & 0.0346\\ 
            \hline
            Shikimate & 8.2\\ 
            \hline
            L-Aspartic acid & 0.0156\\ 
            \hline
	\end{tabular}
\caption{The costs of aromatic amino acids considered in Experiment 3.}
\label{table:optional_nutrient_cost2}
\end{table}

\newpage
\section{Gene function learning example distribution in Experiment 2. } 
\label{app:appendix_distribution}
\begin{table}[h]
\centering
\caption{The distribution of synthetic positive and negative examples for each gene knockout based on iML1515. Gene identifiers start with the lower-case letter ``b'' and gene descriptive names are given in parentheses. }
	\begin{tabular}{c|c|c}
	      Gene Id. & $|E^+|$ & $|E^-|$ \\
            & & \\
		\hline
		& & \\
		b0720 (gltA) & 576 & 121\\ 
		& & \\ \hline
            & & \\
		b1136 (icd) & 576 & 121 \\ 
            & & \\ \hline
            & & \\
		b3729 (glmS) & 576 & 121 \\
            & & \\ \hline
            & & \\
            The other 210 genes & 697 & 0 \\
            & & \\
	\end{tabular}
\label{table:example_distribution}
\end{table}

In Experiment 2, for the 16 carbon sources, 213 genes have a single-knockout phenotypic effect.  Three genes, b0720 (gltA), b1136 (icd) and b3729 (glmS), have both positive and negative examples. We focus on b0720 and b3729 since b0720 and b1136 have the same coverage over the 697 training examples so their functions cannot be distinguished.

% \section*{Appendix B}
\end{appendix}

\end{document}